\newcommand{\algrule}[1][.1pt]{\par\vskip.5\baselineskip\hrule height #1\par\vskip.5\baselineskip}
\newcommand{\IFTHEN}[2]{
 \STATE \algorithmicif\ #1\ \algorithmicthen\ #2\ }
\newtheorem{mydef}{Definition}
\newtheorem{mylemma}{Lemma}
\newtheorem{mytheorem}{Theorem}
\newtheorem{myproposition}{Proposition}
\ifCLASSOPTIONcompsoc \usepackage[caption=false,font=normalsize,labelfont=sf,textfont=sf,labelformat=simple]{subfig}
\newcommand{\rev}[1]{{{\color{black} #1}}} 
\newcommand{\revv}[1]{{{\color{black} #1}}} 
\newcommand{\res}[1]{{{\color{black} #1}}} 
\newcommand{\rev}[1]{#1}
\newcommand{\revv}[1]{#1}
\newcommand{\com}[1]{}
\begin{document}
\title{SAS-Assisted Coexistence-Aware Dynamic Channel Assignment in CBRS Band}

\author{Xuhang~Ying,~\IEEEmembership{Student Member,~IEEE,}
	Milind~M.~Buddhikot,~\IEEEmembership{Member,~IEEE,}
        and~Sumit~Roy,~\IEEEmembership{Fellow,~IEEE,}


\thanks{Xuhang Ying and Sumit Roy are with the Department of Electrical Engineering, University of Washington, Seattle, WA, USA. Email: \{xhying, sroy\}@uw.edu. Milind M. Buddhikot is with the Nokia Bell Labs, Murray Hill, NJ, USA. Email: milind.buddhikot@nokia-bell-labs.com.
Part of this work was present in IEEE DySPAN 2017 \cite{ying2017coexistence}. 
}

}

\maketitle

\begin{abstract}
The paradigm of shared spectrum allows secondary devices to opportunistically access spectrum bands underutilized by primary owners. 
\rev{Recently}, the FCC \rev{has} targeted \rev{the} sharing \rev{of} the 3.5 GHz (3550-3700 MHz) federal spectrum with commercial systems such as small cells. 
The 
rules require a Spectrum Access System (SAS) to \rev{accommodate three service tiers: 1) Incumbent Access,} 2) Priority Access (PA), and 3) Generalized Authorized Access (GAA).
In this work, we study the SAS-assisted dynamic channel assignment (CA) for \rev{PA and GAA tiers}. 
We introduce the node-channel-pair conflict graph to capture pairwise interference, channel and geographic contiguity constraints, spatially varying channel availability, and coexistence \rev{awareness}. 
\rev{The proposed graph representation allows us to} formulate PA CA and GAA CA with binary conflicts as max-cardinality and max-reward CA, respectively.
\rev{Approximate solutions can be found by a heuristic-based algorithm that searches for the maximum weighted independent set.}
\rev{We further formulate GAA CA with non-binary conflicts  as max-utility CA.}
\rev{We show} that the utility function is submodular, and the problem is an instance of matroid-constrained submodular maximization. 
A local-search-based polynomial-time algorithm is proposed that provides a provable performance guarantee.
\rev{Extensive simulations using a real-world Wi-Fi hotspot location dataset are performed} to evaluate the proposed algorithms.
Our results have demonstrated the advantages of the proposed graph representation and improved performance of the proposed algorithms over the baseline algorithms.

\end{abstract}

\begin{IEEEkeywords}
\rev{Citizens Broadband Radio Service, 3.5 GHz Band}, Channel Assignment, Node-Channel Pair Conflict Graph, Coexistence Awareness, Maximum Weighted Independent Set, Submodular Maximization
\end{IEEEkeywords}

\markboth{IEEE TRANSACTIONS ON WIRELESS COMMUNICATIONS, VOL. X, NO. X, XX 201X}
{Working paper}


\section{Introduction}
The rapidly increasing demand in wireless capacity with future ultra-broadband wireless networks 
is driving regulatory bodies to pursue policy innovations based on the paradigm of \textit{shared spectrum}. 
Such paradigm allows secondary devices to access licensed spectrum bands that are underutilized by primary owners.
In \cite{fcc2008TVWS}, the FCC allowed the unlicensed use of underutilized TV spectrum as a first step. 
In the 2012 PCAST report \cite{PCAST2012}, the U.S. government pushed it further by announcing its intentions to explore spectrum sharing policies and innovative technologies to share 1000 MHz federal government spectrum with commercial systems. 
Towards this goal, the FCC targeted \rev{the} release of 3550-3700 MHz band for small cell deployment, termed \rev{the} \textit{Citizens Broadband Radio Service} (CBRS), which is primarily used by high-power Department of Defense shipborne radars and non-federal Fixed Satellite Services (FSS) earth stations.

\begin{figure}[t!]
    \centering
    \includegraphics[width=1\columnwidth]{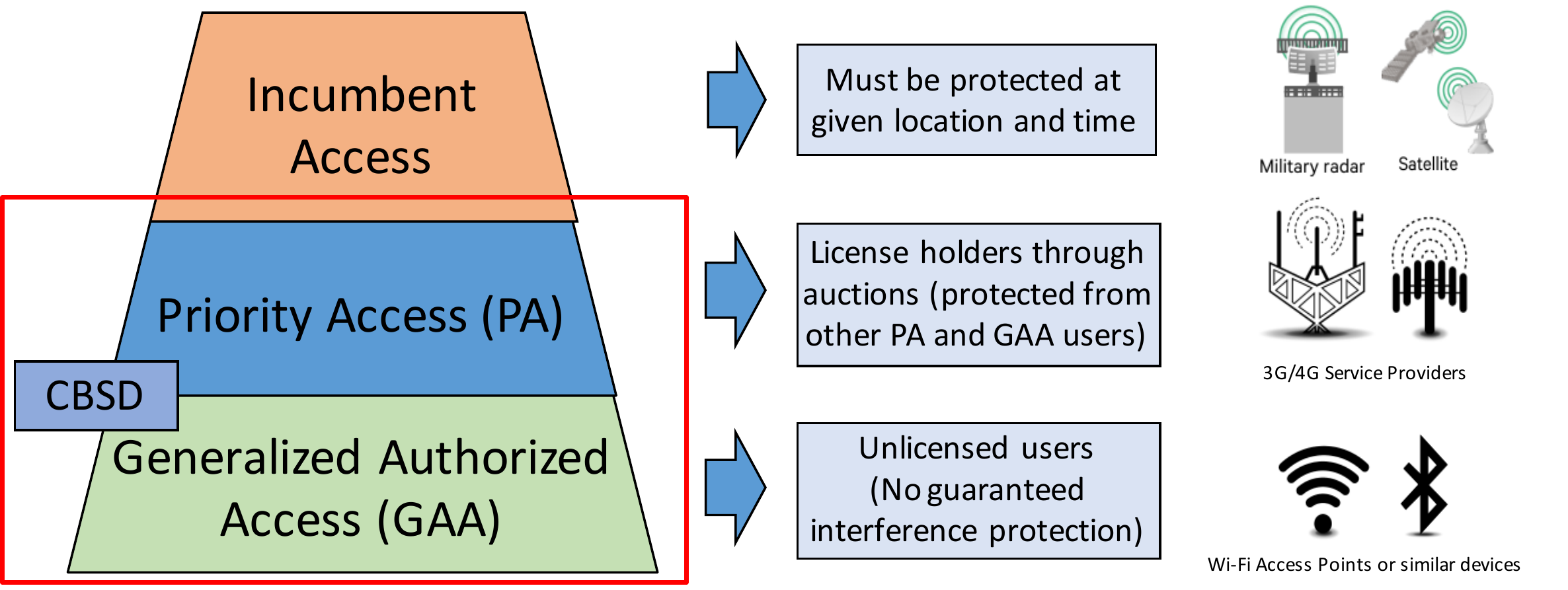}
    \caption{Three-tiered spectrum access in 3.5 GHz CBRS band. }
    \label{fig:three_tier_framework}
\end{figure}

In \rev{the} recent ruling \cite{fcc2012,fcc2014,fcc2015,winnforum2018requirements} for \rev{the} CBRS, the FCC proposed a spectrum \rev{access} framework that consists of three tiers: 1) \textit{Incumbent Access}, 2) \textit{Priority Access} (PA), and 3) \textit{Generalized Authorized Access} (GAA), as illustrated in Fig.~\ref{fig:three_tier_framework}. 
PA and GAA devices are also referred to as \res{Citizens Broadband Radio Service Devices (CBSDs), which are fixed stations or networks of such stations operating on a PA or GAA basis. End user devices are not considered as CBSDs.}
As we can see, the above framework highlights the hierarchical spectrum access rights: 
Incumbents have the highest priority and 
would be protected from harmful interference from all other users by forbidding 
CBSD transmissions within and close to the \rev{activated \textit{Dynamic Protection Areas} (DPAs)}.
\rev{PA users with critical quality-of-service needs (e.g., hospitals and public safety entities) would be authorized to operate at specific locations with certain interference protection.} \rev{ 
But they need to purchase} PA licenses (PALs) \rev{via competitive bidding}, each of which guarantees the authorized and protected use of a $10$ MHz channel in a census tract.
In contrast, GAA CBSDs (or GAA nodes) would be authorized to 
access the shared bands within designated geographic areas, but they \rev{would expect no interference protection and should avoid causing interference to incumbents and PA users}. 
In the center of the CBRS ecosystem lies the \textit{Spectrum Access System} (SAS) \rev{that authorizes and manages use of spectrum for the CBRS.}
\rev{In this work, we study the SAS-assisted dynamic channel assignment (CA) for PA and GAA tiers.}

The 3.5 GHz band is divided into $15$ orthogonal $10$ MHz channels. 
Up to $7$ channels can be assigned to PALs in a \textit{license area} (a census tract), 
and a licensee is allowed to aggregate up to $4$ channels by stacking multiple \revv{PALs} in a \textit{service area} (one or more contiguous license areas).
\rev{While each PAL guarantees one channel, the exact channel assignment is not fixed and will be determined by the SAS.}
Since the FCC requires the SAS to assign \textit{contiguous channels to geographically contiguous PALs}, it imposes a major challenge to the design of a PA CA scheme.

To improve spectrum utilization, a PAL channel is made available for GAA use at locations outside the \textit{PAL protection areas} (PPAs) and their vicinities under the ``use-it-or-share-it'' rule.
\rev{As a result}, \textit{channel availability} becomes location dependent and may vary significantly among GAA nodes.
Moreover, a GAA node may also request multiple \textit{contiguous channels} to meet its capacity demand or support network operations. 
For instance, a Wi-Fi-based node would require at least two contiguous channels in the CBRS. 

In this work, we are interested in co-channel coexistence enabled by Wi-Fi like MAC protocols, such as CSMA/CA \cite{80211ac2013} and Listen-Before-Talk in LTE-LAA \cite{etsi2016lte}. 
\rev{Hence,} the pairwise interference relationship of GAA nodes can be classified into three cases: 1) no conflict, 2) type-I conflict -- two interfering nodes are hidden from each other and cannot detect \res{possible interference at end user devices (or clients) without their feedbacks (e.g., packet loss or throughput degradation)}, \rev{or} 3) type-II conflict -- two interfering nodes are within each other's carrier-sensing (CS) or energy-detection (ED) range and can resolve the conflict through contention. 
Since GAA nodes are able to harmoniously share the same channel under type-II conflicts, 
the SAS \rev{can} exploit such \textit{coexistence opportunities} to accommodate more devices \rev{in a dense network}.

Despite that dynamic CA has been studied in the context of spectrum sharing from various perspectives such as graph coloring \cite{mishra2005weighted, subramanian2007fast, subramanian2008near, hessar2014resource} and
game theory \cite{nie2006adaptive,liu2008spectum,
li2012distributed},
the aforementioned challenges (i.e., channel and geographic contiguity, spatially varying channel availability, and coexistence \rev{awareness}) differentiate dynamic CA in the CBRS from previous work. 
In this paper, we make the following specific  contributions:
\begin{itemize}
\item We define the \textit{node-channel pair} (NC pair) that assigns a set of contiguous and available channels to a node (i.e., a PA service area or a GAA CBSD) and introduce the \textit{NC-pair conflict graph}, in which each vertex is a NC pair and each edge represents a conflict. 

\item With the proposed conflict graph, we formulate PA CA as \textit{max-cardinality CA}. 
For GAA CA with binary conflicts, we extend NC pairs to \textit{super-NC pairs} to exploit type-II conflicts, where each super-node consists of a set of GAA nodes that can detect each other's transmission.
We define a reward function to capture the preferences of (super-)NC pairs and formulate binary GAA CA as \textit{max-reward CA}. 
To further enhance coexistence awareness, 
we define a penalty function and extend the binary conflicts to non-binary.
The non-binary GAA CA is then formulated as \textit{max-utility CA} to trade off rewards against penalties.

\item We propose a super-node formation algorithm based on clique searching and bin packing. 
The max-cardinality and max-reward CA problems are mapped to the problem of finding the maximum (weighted) independent set, and approximate solutions are obtained through a heuristic-based algorithm.
For max-utility CA, we show that the utility function is submodular, and our problem is an instance of matroid-constrained submodular maximization.
We propose a polynomial-time algorithm based on (approximate) local search that provides a provable performance guarantee. 

\item We conduct extensive simulations \rev{using a real-world dataset that contains Wi-Fi hotspot locations in the New York City} \cite{NYCWiFiLocation} to evaluate the proposed algorithms.
For PA CA, our results show that the proposed algorithm consistently serves over $93.0\%$ of service areas and outperforms the baseline algorithm by over $30.0\%$.
For binary GAA CA, the proposed algorithm with linear rewards \rev{is able to accommodate $10.2\%$ more nodes and serve $10.4\%$ more demand on average than the baseline algorithm}. \rev{Besides, enabling coexistence awareness can effectively improve the performance of the proposed algorithms}. 
For non-binary GAA CA, \rev{the proposed algorithm achieves $29.5\%$ more utility while keeping the total interference much smaller than \revv{random selection}}.
\revv{We also show that an operator can leverage the proposed framework to optimize the overall capacity of GAA networks.}


\end{itemize}

The remainder of this paper is organized as follows.
Section~\ref{sec:related_work} reviews the related work on CA.
Section~\ref{sec:channel_assignment_scenario} summarizes CA scenarios and presents the SAS-based architecture.
Conflict graphs and problem formulations are presented in Section~\ref{sec:conflict_graph}, and proposed algorithms are provided in Section~\ref{sec:algo}.
Section~\ref{sec:evaluation} presents simulation results and Section~\ref{sec:conclusion} concludes this study.

\section{Related Work}\label{sec:related_work}
Dynamic CA has been studied in different contexts including cellular networks \cite{katzela1996channel}, mesh networks \cite{si2010overview} and cognitive radio networks \cite{akyildiz2006next}.
Among the proposed approaches, \rev{graphs are widely adopted for CA modeling.}
In \cite{cao2005distributed, wang2005list, peng2006utilization,hessar2014resource}, authors considered CA \rev{under the} channel availability and interference constraints \rev{using} vertex-weighted interference/conflict graphs, where each vertex represents a user with a color \rev{(channel)} list, and each edge denotes a binary conflict due to co-channel interference.
The problem is then formulated as graph multi-coloring (i.e., assign 
conflict-free colors to users and obtain rewards on vertices) and a typical objective is to maximize the total reward or utility (derived from rewards).
Such conflict-free CA algorithms 
are suitable for sparse networks with QoS consideration for each user.

In order to capture the interference with greater granularity, non-binary conflicts using an edge-weighting or penalty function may be introduced, and typical objectives include minimizing the total interference or penalty via graph coloring \cite{mishra2005weighted, subramanian2007fast, yue2011cacao} or maximizing the total reward subject to the aggregate interference limit at each user \cite{hoefer2014approximation,hessar2014resource}. 
However, \rev{the vast majority} of existing works do \textit{not} consider geographic and channel contiguity constraints \rev{and thus are} not applicable to CA in the CBRS. 
In \cite{subramanian2008near}, Subramanian \textit{et. al.} introduced the notion of channel graph to enforce channel contiguity,  
where each vertex is a channel consisting of several contiguous primitive channels and each edge \rev{indicates overlapping channels.}
\rev{In Section~\ref{sec:evaluation_max_reward_CA}, we will show that our proposed algorithm outperforms the algorithm in \cite{subramanian2008near} in terms of accommodating more users and meeting greater demand. 
}


\begin{figure*}[!ht]
	\centering
	\subfloat[]{\includegraphics[width=1\columnwidth]{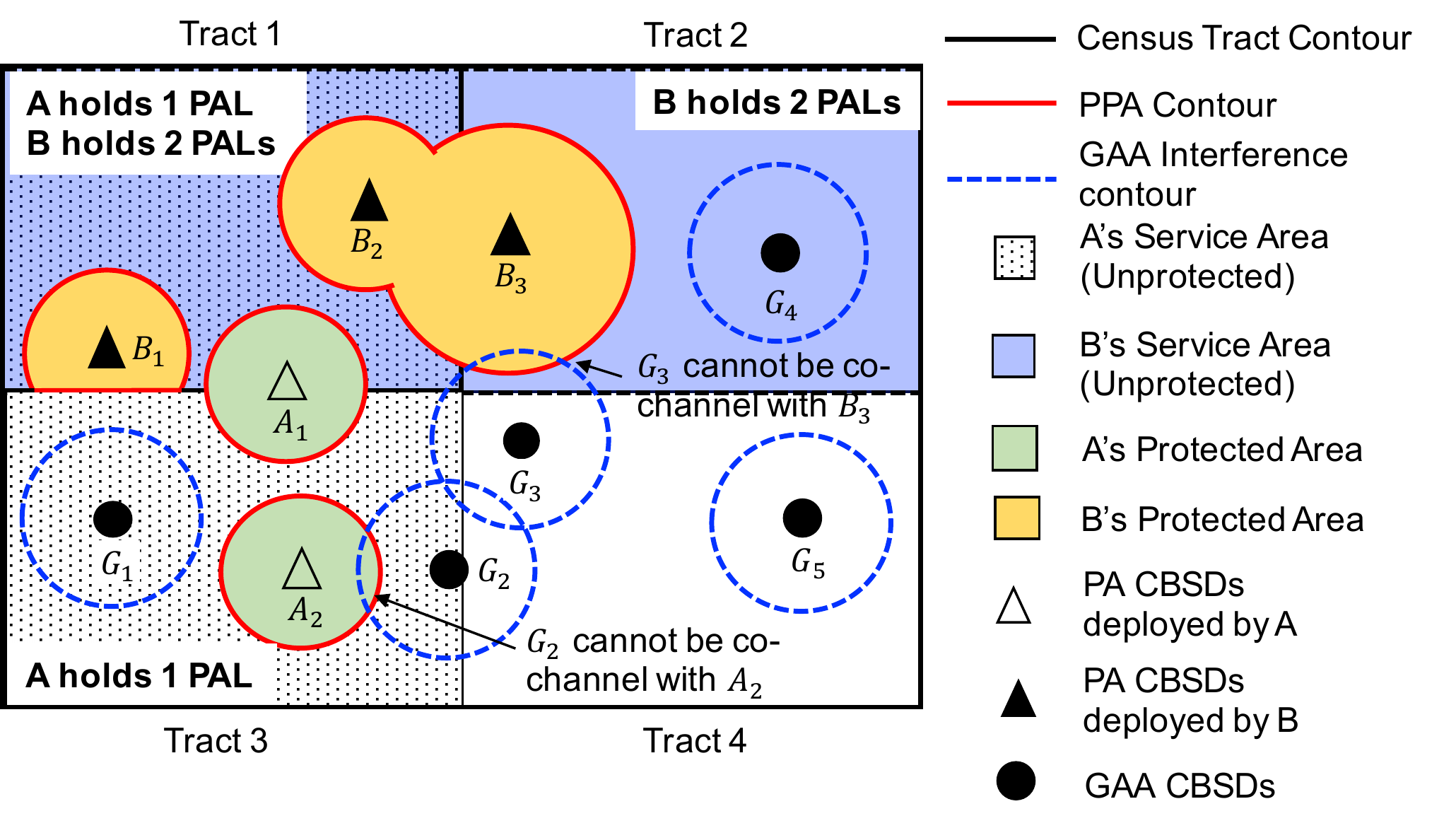}\label{fig:channel_assignment_example1}}
	\subfloat[]{\includegraphics[width=.8\columnwidth]{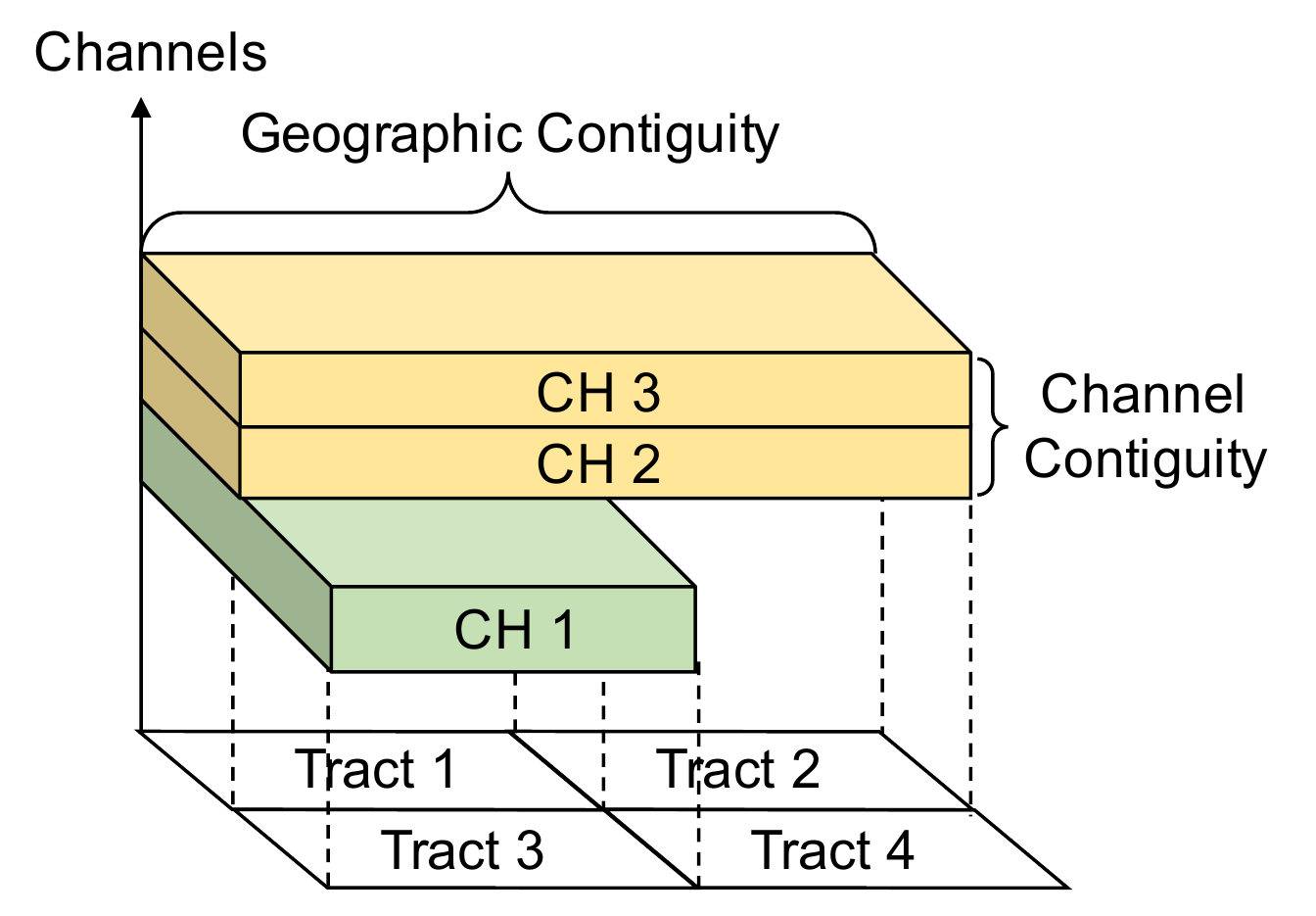}\label{fig:channel_assignment_example2}}
	\caption{Illustration of the PA CA scenario. 
		(a) A holds one PAL in its service area that consists of tracts 1 and 3, while B stacks two PALs in its service area that consists of tracts 1 and 2.  
		Each PA CBSD deployed by the licensee has a PPA that cannot extend beyond its service area. 
		GAA nodes may reuse assigned PAL channels at locations beyond the PPAs, subject to the interference constraint. 
		(b) Given three PAL channels, the SAS can assign CH $\{1\}$ in tracts $1$ and $3$ to A's PALs and CH $\{2,3\}$ in tracts $1$ and $2$ to B's PALs, so as to meet the geographic and channel contiguity requirements.
	}
	\label{fig:channel_assignment_example}
\end{figure*}

\rev{In this work, we incorporate the contiguity constraints by introducing (super)NC-pair conflict graphs, which can also be used to represent channel availability and enhance coexistence awareness. 
Besides, the proposed conflict graph allows us to relate the PA CA and binary GAA CA to the classic the maximum (weighted) independent set problem \cite{halldorsson2000approximations,sakai2003note,kako2005approximation} and adopt a widely used heuristic-based algorithm that runs in polynomial time with a performance guarantee.
To further enhance coexistence awareness, we consider non-binary GAA CA and formulate the problem as utility maximization.}
We then prove the submodularity of the utility function and
propose a polynomial-time algorithm based on local search with a provable performance guarantee.


%
%

\section{Channel Assignment \rev{in CBRS}}\label{sec:channel_assignment_scenario}
In this section, we describe CA scenarios for PA and GAA tiers based on the FCC rules \cite{fcc2012, fcc2014, fcc2015, fcc2016}.
In particular, we highlight CA challenges and opportunities in the CBRS: geographic and channel contiguity, spatially varying channel availability, flexible demands, and coexistence awareness. 
\rev{Then we present the architecture of SAS from the perspective of dynamic CA. }

\subsection{PA CA Scenario}\label{sec:PA_channel_assignment}
As per the FCC rules, 
each deployed PA CBSD has an associated PPA with a default protection contour that is calculated based on the signal level of $-96$ dBm/$10$ MHz, which cannot extend beyond the licensee's service area.
A self-reported PPA contour is also acceptable, so long as it is within the default PPA contour.
If PPAs for multiple CBSDs operated by the same licensee overlap, they would be merged into a single PPA. 
Note that interference protection is \rev{enforced} for each active PPA, that is, the aggregate co-channel interference from other PA or GAA CBSDs at any location within a PPA cannot exceed $-80$ dBm/$10$ MHz. 



The FCC imposes two contiguity requirements for PA CA: 1) \textit{geographic contiguity} -- an SAS must assign geographically contiguous PALs held by the same PA licensee to the same channels in each geographic area, to the extent feasible; and 2) \textit{channel contiguity} -- an SAS must assign multiple channels held by the same PA licensee to contiguous channels in the same license area, to the extent feasible.
An example is provided in Fig.~\ref{fig:channel_assignment_example} that illustrates the PA CA scenario. 

%



\subsection{GAA CA Scenario}

GAA CA \rev{differs} from PA CA in the following aspects. 
First, channel availability for GAA nodes can vary significantly depending on location.
This is because a PAL channel is considered ``in use'' only within PPAs and becomes available for GAA use at a location beyond the PPAs (and their vicinities), 
if GAA transmission at that location does not cause 
\rev{over-the-limit interference} 
at any location within PPAs. 
As illustrated in the example in Fig.~\ref{fig:channel_assignment_example}, GAA nodes $G_1$ and $G_4$ are located in the service area of PA licensees, but they are free to use all channels just like $G_5$, since they do not cause significant interference to any PA CBSD. 
In contrast, $G_2$ cannot operate co-channel with $A_2$, and $G_3$ cannot operate co-channel with $B_3$, due to the interference constraint. 
Although it is not desirable for $G_2$ and $G_3$ to operate co-channel as they would interfere with each other, it is indeed acceptable as per the FCC ruling.


Second, a GAA node may request multiple contiguous channels (up to a certain limit such as four), \rev{and the demand could be very flexible}. 
For instance, a Wi-Fi node may request two contiguous channels but would be very willing to receive four to obtain higher capacity.
Although the SAS is not obligated to meet the maximum demands of all GAA nodes, it \rev{would be expected to} maximize the number of assigned channels or the overall throughput with best efforts.

\begin{figure*}[h!]
	\centering
	\includegraphics[width=.7\textwidth]{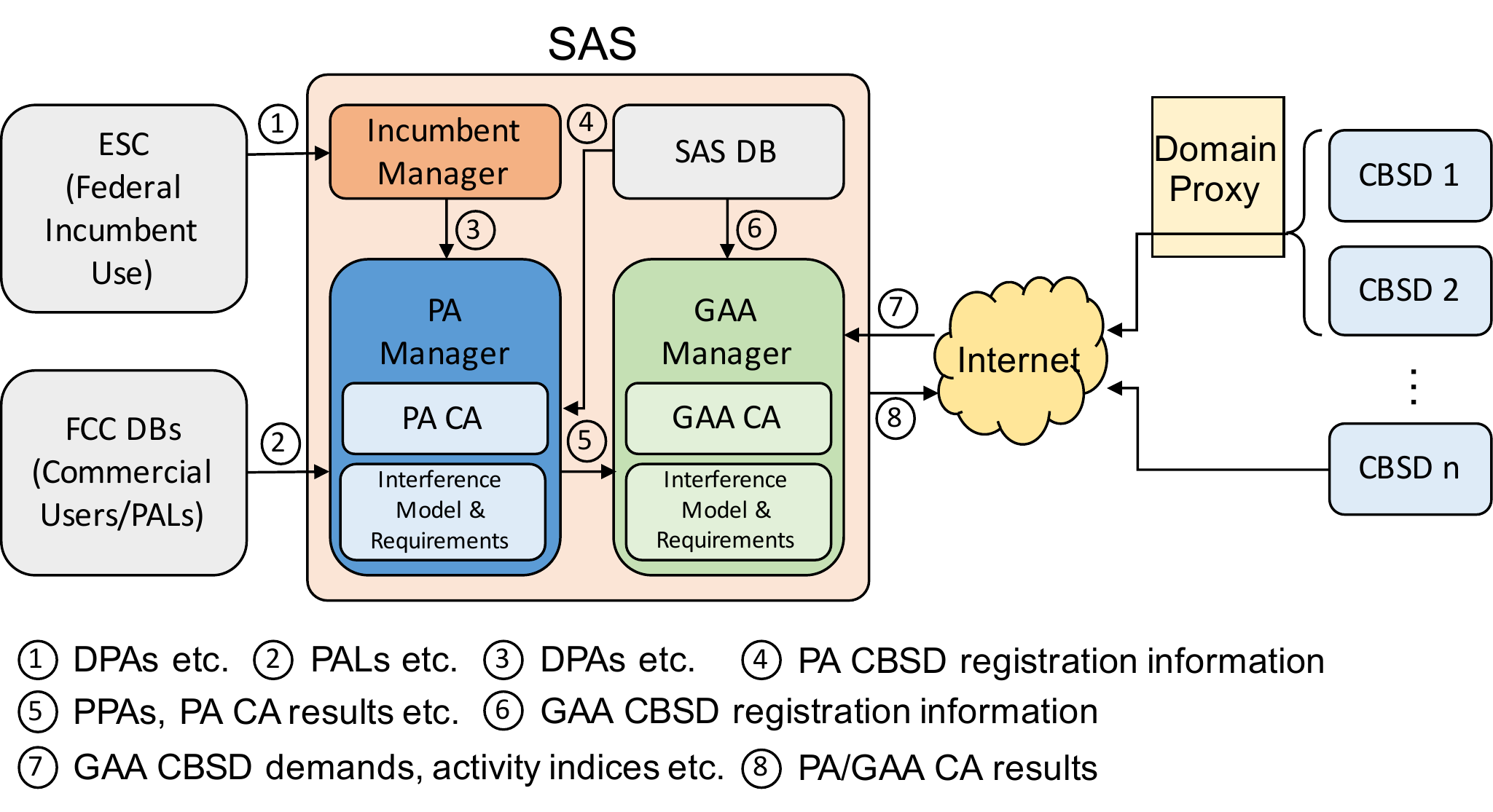}
	\caption{Architecture of SAS as a centralized entity from the perspective of dynamic CA. A centralized PA/GAA CA algorithm will be implemented in the PA/GAA manager within the SAS. Note that only data flows for PA/GAA CA purposes are highlighted.}
	\label{fig:SAS_assisted_CA_architecture}
\end{figure*}

Third, GAA nodes are expected to coexist in the same frequency, space and time. 
In this work, we consider co-channel coexistence enabled by Wi-Fi like MAC protocols such as CSMA/CA in Wi-Fi \cite{80211ac2013} and Listen-Before-Talk in LAA \cite{etsi2016lte}. 
Traditional channel assignment techniques \cite{cao2005distributed, wang2005list, peng2006utilization} typically model interference as a binary conflict and avoid assigning the same channel to any two conflicting nodes.
However, two interfering nodes in close proximity are able to detect each other's transmission (i.e., no hidden nodes) and share the same channel(s) in the CSMA/CA fashion.  
Hence, the SAS may exploit such opportunities to accommodate more nodes in this case, especially when the spectrum is crowded. 
In this work, we further consider non-binary conflicts in terms of penalties to enhance \textit{coexistence awareness}. 

%


\res{
\subsection{Architecture of SAS-Assisted Dynamic CA}
As shown in Fig.~\ref{fig:SAS_assisted_CA_architecture}, the SAS is a centralized entity consisting of incumbent, PA and GAA managers.
It communicates with the Environment Sensing Capability (ESC) component, FCC databases (DBs), and CBSDs. 
The PA manager receives information on DPAs and PALs, determines PAL channel availability based on the interference model and requirements, and executes the PA CA algorithm to assign channels to PALs grouped by service areas.  
The GAA manager takes as inputs 1) the PPAs and PA CA results, 2) GAA CBSD registration information (e.g., location, transmit power, and antenna gain), 3) CBSD demands and activity indices (i.e., traffic load indicators), and 4) the interference model\footnote{Radio propagation models, despite their limitations, are still widely used for initial network planning, but the interference relationship can also be determined from real measurements. Such measurement reporting is already supported by the WInnForum SAS-CBSD protocol \cite{winnforum2018signaling}.} and requirements. 
It then determines channel availability for GAA nodes and executes the GAA CA algorithm. 
Finally, the SAS will disseminate the PA and GAA CA results to individual CBSDs (e.g., within 10s of seconds). 

According to the WInnForum SAS-CBSD protocol specifications \cite{winnforum2018signaling}, the protocol exchange between the SAS and CBSDs already exists, including requests and responses for (de)registration, spectrum inquiry, grant (relinquishment) and heatbeat.
Therefore, the SAS can readily leverage registration and regular heartbeat messages to collect inputs from CBSDs and use them for CA purposes, without introducing additional signaling overhead. 

It is also important to note that the CA in the CBRS is indeed dynamic:  changes in DPAs and PALs would trigger PA CA, which further trigger GAA CA along with other factors (e.g., changes in GAA CBSD demands). 
Since DPA and PAL changes are relatively infrequent, PA channel assignments will be expected to be quasi-stable, and channel reassignment occurs presumably in the order of hours or days. In contrast, the SAS may have to aggregate demands from all GAA nodes and perform GAA CA periodically (e.g., every 100s of seconds).
Nevertheless, the SAS can always take a snapshot of the CBSD network at a given time instant and perform CA for PA and GAA tiers separately in a centralized fashion\footnote{The SAS itself may be implemented in a distributed cloud so as to exploit dedicated computing resources spread over multiple virtual machines to perform CA for a specific region.}. }

\section{Novel Conflict Graphs and Problem Formulation}\label{sec:conflict_graph}

\rev{In order to incorporate channel availability and contiguity}, we define a NC pair as a channel assignment that assigns a set of contiguous and available channels to a node, i.e., a PA service area or a GAA node.
We then \rev{introduce} the NC-pair conflict graph, where each vertex is a NC pair and each edge indicates a conflict. 
In \rev{the rest of this section}, we will describe conflict graph construction in detail and present our problem formulation for each CA scenario.

\subsection{Conflict Graph for PA CA}
\rev{Let us} consider a geographic region with $M$ PA licensees. 
Each licensee $i$ holds $N_i$ PALs in $n_i$ service areas,
and there are a total of $n=\sum_{i=1}^M n_i$ service areas in total. 
Let $S_{ij}$ be the set of license areas that form the $j$-th service area of licensee $i$, and each license area in $S_{ij}$ has $N_{ij}$ PALs\footnote{The constraint that no more than seven PALs shall be assigned in any license area should have been enforced during the licensing stage and can be verified by checking $\sum_{i=1}^M \sum_{j=1}^{n_i} \bold{1}_{\{s \in S_{ij}\}} \cdot N_{ij} \leq 7$ for each license area $s$, where $\bold{1}_{\{s \in S_{ij}\}}$ is the indicator variable.}.
There are a total of $10$ PAL channels in the CBRS, denoted as $\Omega=\{1,2,...,10\}$.
Due to the \rev{DPAs} of incumbents, the set of available PAL channels for $S_{ij}$, denoted as $\Gamma(S_{ij})$, is a subset of $\Omega$. 
Denote the set of valid channel assignments for $S_{ij}$ as $\mathcal{C}_{ij}$. 
It means that each channel assignment $C \in \mathcal{C}_{ij}$ consists of contiguous channels and satisfies the conditions $C\in \Gamma({S_{ij}})$ and $|C| = N_{ij}$. 
For instance, with $\Gamma(S_{ij})=\{1,2,3\}$ and $N_{ij}=2$, we have $\mathcal{C}_{ij}=\{ \{1,2\}, \{2,3\}\}$.

In the proposed conflict graph $G=(V,E)$ for PA CA, each vertex is a NC pair $(S_{ij}, C_{ij})$ that assigns channel(s) $C_{ij}$ to $S_{ij}$, and an edge exists between $(S_{ij}, C_{ij})$ and $(S_{lk}, C_{lk})$, if
\begin{itemize}
\item $i=j$ and $l=k$ (i.e., $S_{ij}$ and $S_{lk}$ refer to the same service area), and $C_{ij} \neq C_{lk}$, or


\item $S_{ij}\cap S_{lk} \neq \emptyset$ and $C_{ij} \cap C_{lk} \neq \emptyset$.

\end{itemize} 
The first constraint requires each service area to take at most one channel assignment, referred to as the \textit{one-channel-assignment-per-node constraint}, while
the second constraint prevents two overlapping service areas from being assigned overlapping channels, which is called the \textit{conflict constraint}. 
\res{It is worth noting that by selecting a vertex $(S_{ij}, C_{ij})$, that is, assigning channel(s) $C_{ij}$ to $S_{ij}$, the SAS can meet both the geographic and channel contiguity requirements.}
An example of the proposed conflict graph for PA CA is provided in Fig.~\ref{fig:PA_conflict_graph_ex}.

\begin{figure}[ht!]
	\centering
	\includegraphics[width=0.6\columnwidth]{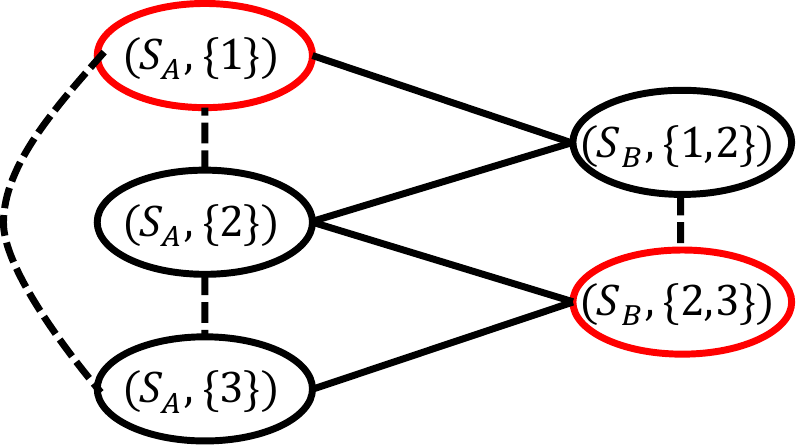}
	\caption{
		Conflict graph for the PA CA example in Fig.~\ref{fig:channel_assignment_example}. 
		Dashed and solid edges are due to the one-channel-assignment-per-node and conflict constraints, respectively. 
		The SAS aims to choose the largest set of conflict-free vertices. 
		One possible CA scheme consists of two vertices in red, i.e, $(S_A, \{1\})$ and $(S_B, \{2,3\})$. 
		An alternative is $(S_A, \{3\})$ and $(S_B, \{1,2\})$.
	}
	\label{fig:PA_conflict_graph_ex}
\end{figure}

\res{Since a conflict due to interference exists only for two NC pairs that represent two overlapping service areas with overlapping channels, a sparse conflict graph would be expected in practice.
Therefore, adjacency lists would be preferred in general with a storage cost of $O(|V|+|E|)$. It takes $O(1)$ time to add a vertex or edge, $O(E)$ to remove a vertex, and $O(V)$ to remove an edge.
More details are available in \cite{cormen2009introduction}.}

%

\noindent \textbf{Problem Formulation:}
Given a conflict graph $G(V,E)$ for PA CA, the SAS wants to find a scheme $I \subseteq V$ so as to maximize the total number of served service areas without conflicts,
\begin{align}
\max_{I \subseteq V} &~|I| \nonumber \\
\text{Subject to: } e(u, v) & \notin E, \forall u,v\in I,\label{eq:max_cardinality_CA}
\end{align}
where $|\cdot|$ is the cardinality operator.
We refer to the above problem as \textbf{Max-Cardinality CA}. 

\subsection{Conflict Graph with Binary Conflicts for GAA CA }
\label{sec:conflict_graph_binary}
Suppose that there are $n$ GAA nodes at fixed locations in the same geographic region. 
The ground set of channels is $\Omega = \{1,2,...,15\}$, and the SAS can determine the set of available channels $\Gamma(i) \subseteq \Omega$ 
for each GAA node $i$. 
In this paper, we focus on CA by assuming fixed transmit power for all nodes and leave joint channel and power assignment as future work.

Denote \rev{the demand set of} node $i$ as $\mathcal{D}(i)=\{d_{i1}, d_{i2},...\}$, where $d_{ik}$ is the number of contiguous channels \rev{node $i$ requests}. 
For example, $\mathcal{D}(i)=\{2,4\}$ means that node $i$ requests for two or four contiguous channels.
By default, the SAS may set $\mathcal{D}(i)$ to $\{1,2,...,L\}$, where $L$ is the limit (e.g., $L=4$). 
As a result, $\Gamma(i)$ and $\mathcal{D}(i)$ jointly determine the set of valid channel assignments $\mathcal{C}(i)$.
For example, with $\Gamma(i)=\{1,4,5\}$ and $\mathcal{D}(i)=\{1,2\}$, we have $\mathcal{C}(i) = \{\{1\}, \{4\}, \{5\}, \{4,5\}\}$. 


In the proposed conflict graph $G(V,E)$ for \rev{binary} GAA CA, each vertex is a NC pair $(i, C_i)$ that assigns channel(s) $C_i \in \mathcal{C}(i)$ to node $i$, and an edge exists between $(i,C_i)$ and $(j,C_j)$ if 
\begin{itemize}
\item $i=j$ and $C_i \neq C_j$, or 

\item $i\neq j$, $C_i \cap C_j \neq \emptyset$, and node $i$ would conflict or interfere with node $j$ when assigned the same channel based on the interference model.
\end{itemize}
\rev{The above two conditions correspond to the one-channel-assignment-per-node constraint and the conflict/interference constraint, respectively.}
\res{Like the conflict graph for PA CA, the conflict graph for GAA CA is expected to be sparse and thus adjacency lists are generally preferred.}

In the rest of this section, we will characterize pairwise interference relationship, discuss coexistence awareness under binary conflicts, and introduce a reward function to reflect preferences of NC pairs. 
Then we will present our problem formulation for binary GAA CA.




\begin{figure*}[!t]
	\centering
	\subfloat[]{\includegraphics[width=.3\textwidth]{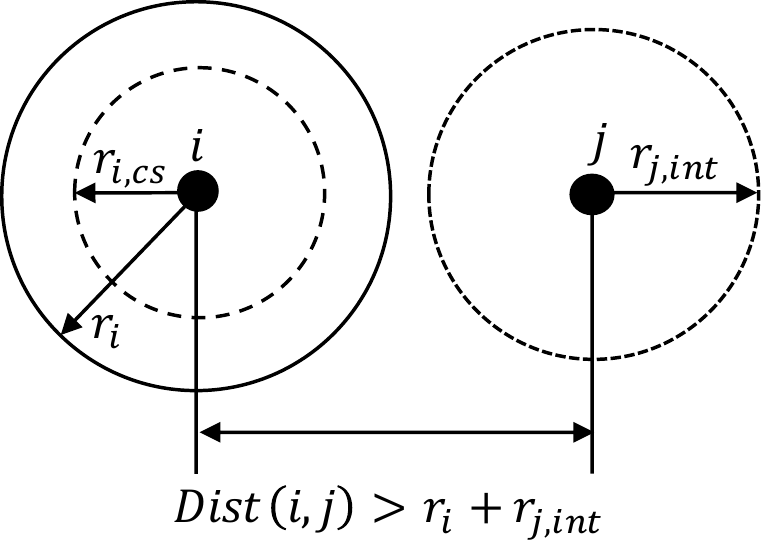}
		\label{fig:protocol_model_case1}}
	~
	\subfloat[]{\includegraphics[width=.24\textwidth]{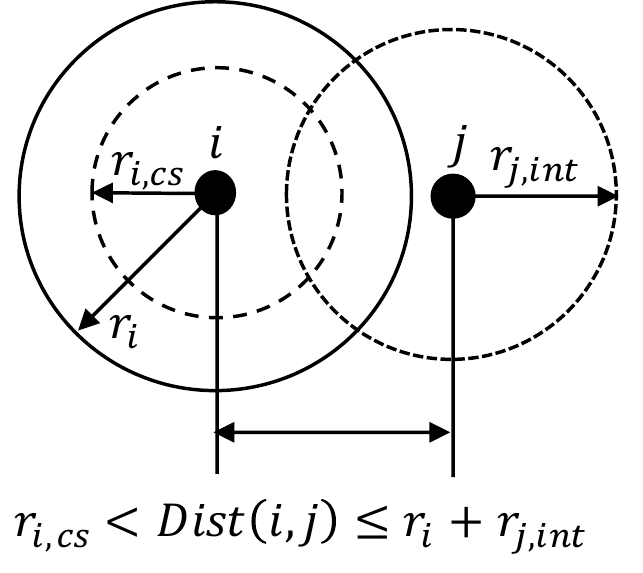}
		\label{fig:protocol_model_case2}}
	~~~~~~
	\subfloat[]{\includegraphics[width=.18\textwidth]{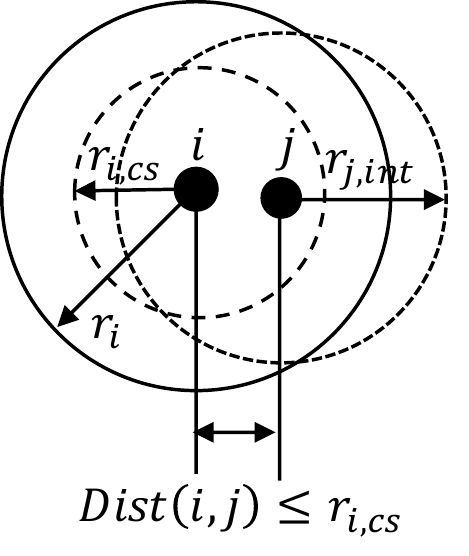}
		\label{fig:protocol_model_case3}}
	\caption{Impact of node $j$'s interference on node $i$. 
		(a) No conflict. Nodes $i$ and $j$ are free to reuse the same channel if $j$'s interference region does not overlap with $i$'s service region. 
		(b) Type-I conflict. Node $j$ causes non-negligible interference to $i$'s clients located in the overlapping region, which cannot be detected by $i$ \rev{without feedback from the clients such as packet loss or throughput degradation}. 
		(c) Type-II conflict. Node $j$ is within $i$'s CS or ED range, and its interference can detected by $i$.
	}
	\label{fig:protocol_model}
	
\end{figure*}

\subsubsection{Characterization of Pairwise Interference}\label{sec:interference_model}
First of all, the SAS needs to determine whether two GAA nodes are interfering with each other, which is a binary (yes-or-no) decision. 
Similar to the interference protection for PA users, the SAS may decide that node $j$ is interfering with $i$, if the received interference from node $j$ at any location within $i$'s service contour (e.g., the $-96$ dBm/$10$ MHz contour) exceeds a certain limit (e.g., $-80$ dBm/$10$ dBm).
Denote node $i$'s service region as $\mathcal{R}_i$ of radius $r_i$ and node $j$'s interference region as $\mathcal{R}_{j,int}$ of radius $r_{j, int}$.
Then there exists no conflict between $i$ and $j$ \rev{due to $j$' interference}, if the distance is larger than $r_i + r_{j,int}$, regardless of the positions of $i$'s clients (Fig.~\ref{fig:protocol_model_case1}).

As nodes $i$ and $j$ get closer,  $\mathcal{R}_{j,int}$ starts to overlap with $\mathcal{R}_{i}$, which means that $j$ may cause interference to clients that are possibly located in the overlapping region in the downlink. 
If the distance is smaller than $r_i + r_{j,int}$ but larger than $i$'s CS/ED range $r_{i,cs}$, as shown in Fig.~\ref{fig:protocol_model_case2}, node $j$ is said to be hidden from $i$, and the interference from $j$ cannot be detected by $i$ \rev{without feedback from $i$'s clients}.
When the distance is less than $r_{i,cs}$ (Fig.~\ref{fig:protocol_model_case3}), node $i$ is able to detect $j$'s transmission via CS/ED and 
achieve 
co-channel coexistence \rev{through contention}. 
In order to distinguish the above cases, we call them type-I and type-II conflicts, respectively. 
\rev{Note that} our interference model is very similar to the popular protocol models in literature (e.g., \cite{gupta2000capacity,mishra2006client, niculescu2007interference}), \rev{which} can be easily implemented in the SAS using radio propagation models.

\subsubsection{Coexistence Awareness}\label{sec:coexistence_awareness}
Ideally, two nodes should be assigned the same channel only when they do not conflict.
Nevertheless, \rev{GAA nodes have the lowest spectrum access priority} and are expected to coexist in the presence of conflicts.
Compared with type-I conflicts, type-II conflicts can be  handled more easily and thus may be considered as coexistence opportunities to the SAS.  
To that end, we introduce \textit{super-NC pair} $(S,C)$, where $S$ is \rev{a \textit{super-node}} that consists of nodes that can coexist on channel $C$ \rev{with two conditions: 1) channel $C$ is available at all nodes in $S$ and 2)} nodes in $S$ be within each other's CS/ED range so that all conflicts can be gracefully resolved.
The super-node formation algorithm will be presented later in Section~\ref{sec:super_node_formation}.

Fig.~\ref{fig:GAA_conflict_graph_ex} shows an example of the proposed conflict graph \rev{with and without coexistence awareness}.
\rev{Suppose that} nodes B, C are within each other's CS/ED range. \rev{Then} there are two super-NC pairs $(\{B,C\}, \{1\})$ and $(\{B,C\}, \{2\})$.
\rev{As illustrated in Fig.~\ref{fig:GAA_interference_graph_with_coex}}, 
when a super-NC pair is identified, it is added to the graph and inherits the conflict relationship of its children RC pairs. 
Then edges among its children RC pairs are removed.  
For instance, removing the edge between $(B,\{2\})$ and $(C,\{2\})$ means that \rev{when} $(B,\{2\})$ is selected, $(\{B,C\},\{2\})$ \rev{will become} invalid, but the SAS can still select $(C,\{2\})$, \rev{since B and C are able to resolve type-II conflicts. } 

\begin{figure}[!t]
	\centering
	\subfloat[]{\includegraphics[trim=0 0.8cm 0 0.8cm,clip=true,width=.7\columnwidth]{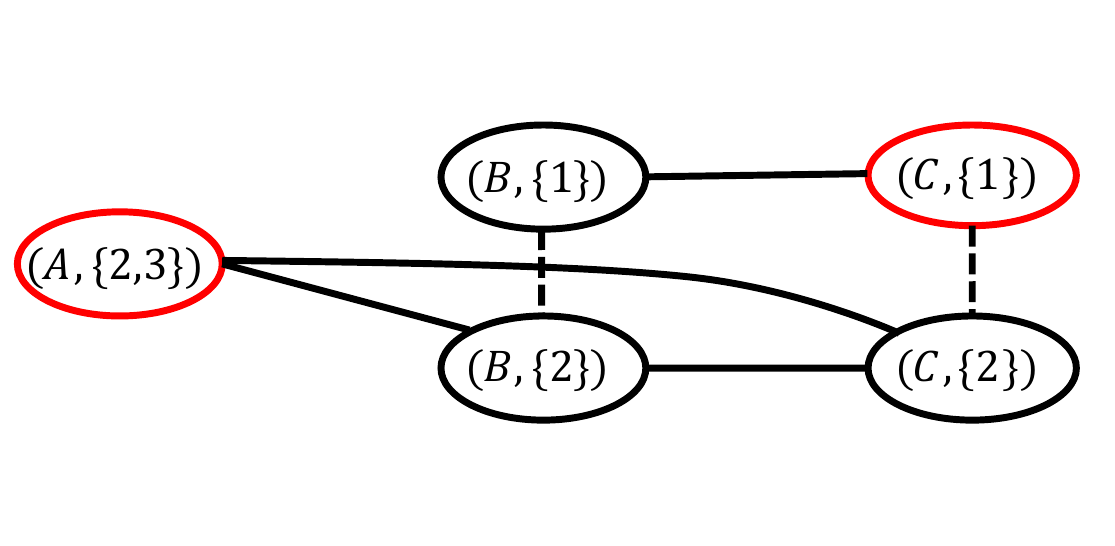}
		\label{fig:GAA_interference_graph_no_coex}}
	
	\subfloat[]{\includegraphics[width=.7\columnwidth]{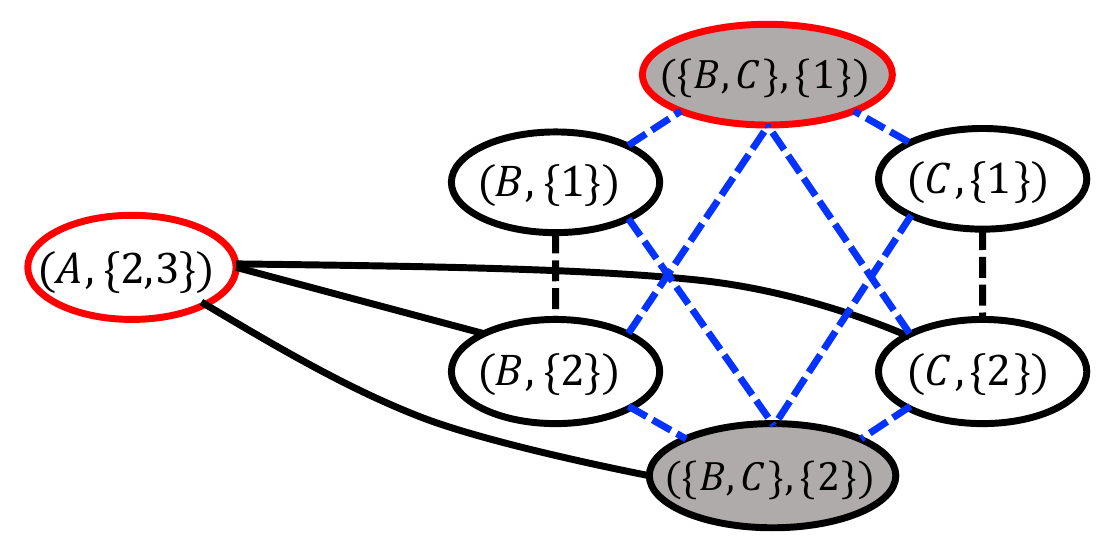}
		\label{fig:GAA_interference_graph_with_coex}}
	\caption{
		Example of (a) NC-pair conflict graph and (b) coexistence-aware NC-pair conflict graph.
		Suppose that A has two channels $\{2,3\}$ available and requests for two contiguous channel, while B and C have two available channels $\{1,2\}$ and each requests for one channel. 
		A, B, and C are interfering with each other, but B and C are within each other's CS/ED range. 
		Based on the \rev{conflict graph} in (a), the SAS may assign CH $\{2,3\}$ to A and CH $\{1\}$ to C. 
		Realizing that B and C are able to \rev{coexist under type-II conflicts, }
		the SAS may assign CH $\{1\}$ to both B and C based on the conflict graph in (b). 
	}
	\label{fig:GAA_conflict_graph_ex}
\end{figure}

\subsubsection{Rewards}
In GAA CA, \rev{the basic objectives of the SAS} would include accommodating more nodes and \rev{assigning more channels.} 
As a result, NC pairs that assign more channels to more nodes are generally more preferred. 
To reflect such preference, the SAS can define a \textit{reward (or vertex weighting) function} $R: V \mapsto \mathbb{R}^+$ that assigns a positive reward to each vertex $v=(S,C)$. 
\rev{If the SAS aims to meet more demand by assigning more channels,} the \textit{linear} reward function $R(v)=R(S,C) = |S|\cdot|C|$ may be used, which \rev{is equal to} the \rev{total} number of assigned channels \rev{at each vertex}. 
\rev{If the SAS wants to prioritize the objective of accommodating more nodes,} an alternative \textit{log} reward function \rev{may be adopted, i.e.,} $R(v)=R(S,C) = |S| \cdot ( 1 + \log(|C|))$, which captures the decreasing benefit of a node getting additional channels. 


\noindent \textbf{Problem Formulation}:
Given a vertex-weighted conflict graph $G(V,E,R)$ for $n$ nodes \rev{for binary GAA CA}\footnote{\rev{In this work, we focus on the CBSD-centric graph representation and algorithms for GAA CA. Hence, since GAA nodes are at fixed locations, the SAS does not need to consider node mobility. We will leave extending this work to client-centric and studying the impact of end user device mobility on the proposed algorithms as future work.}}, the SAS wants to find a scheme $I \subseteq V$ so as to \revv{maximize the total reward and accommodate more nodes, subject to the conflict constraint, i.e.,}
%
%
\begin{align}
\max_{I \subseteq V} &~ R'(I) \nonumber \\ \text{Subject to: } e(u,v) &\notin E, \forall u,v \in I, 
\label{eq:max_reward_CA}
\end{align}
\revv{where $R'(I)=\sum_{v\in I} (R(v) + \lambda |S(v)|)$ ($S(v)$ is the set of nodes at vertex $v$) and $\lambda \geq 0$ is the trade-off parameter chosen by the SAS.}
We refer to the problem in (\ref{eq:max_reward_CA}) as \textbf{Max-Reward CA}.


\subsection{Conflict Graph with Non-Binary Conflicts for GAA CA }
\rev{In order to} further exploit type-I conflicts, we extend the proposed conflict graph to \rev{incorporate} non-binary conflicts and formulate the CA problem as a utility maximization problem.
We start with a conflict graph without any super-NC pair as constructed in Section~\ref{sec:conflict_graph_binary}. 
To differentiate the impacts of a conflict on two interfering nodes, we consider directed edges and use a \textit{penalty (or edge weighting) function} $P: E \mapsto \mathbb{R}^+$ to assign a non-negative penalty to each directed edge.
Let the penalty of $e(u,v) \in E$  as $P(e(u,v))$ or simply $P_{u,v}$, i.e., \rev{the penalty on $v$ due to the interference from $u$.}
In general, we have $P_{u,v} \neq P_{v,u}$, and $P_{u,v}=0$ if $e(u,v) \notin E$. 


\begin{figure}[!t]
	\centering
	\includegraphics[width=.8\columnwidth]{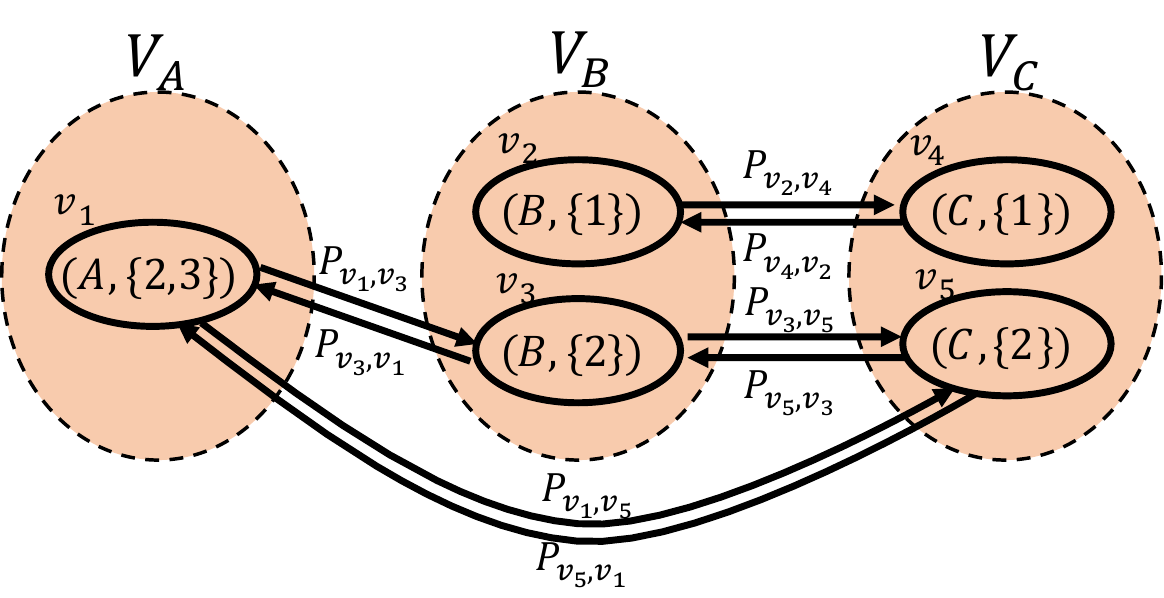}
	\caption{
		Example of NC-pair conflict graph with non-binary conflicts.
		Each vertex \rev{is associated with} a reward and each directed edge has a penalty. 
		\rev{Note that NC pairs belonging to the same node are grouped as a cluster and there are no intra-cluster edges.}
	}
	\label{fig:GAA_modified_conflict_graph}
\end{figure}


Fig.~\ref{fig:GAA_modified_conflict_graph} illustrates the proposed non-binary conflict graph.
In addition to the directed edges with penalties, the other main difference is that 
the set of NC pairs that belong to the same node are grouped as a cluster, and there are no edges within a cluster.
The one-channel-assignment-per-node constraint is enforced instead by selecting at most one NC pair from each cluster. 

\noindent \textbf{Problem Formulation}: 
Given a vertex-weighted and edge-weighted conflict graph $G(V,E,R,P)$ for $n$ nodes \rev{for non-binary GAA CA}, the SAS wants to find a scheme $I \subseteq V$ so as to 
\begin{align}
\max_{I \subseteq V} &~U(I) \nonumber \\
\text{Subject to: } |I \cap V_i| &\leq 1, \forall i=1,2,...,n, \label{eq:max_utility_CA}
\end{align}
where 
\begin{equation}
U(I) = \sum_{v \in I} R(v) - \lambda \cdot \sum_{u,v\in I} P_{u,v} \label{eq:utility_function}
\end{equation}
is the \textit{utility function} and $\{V_i\}$ are NC-pair clusters that form a partition of $V$, that is, $V=\bigcup_{i}V_i$ and $V_i\cap V_j = \emptyset$ for $i\neq j$.
We refer to the above problem as \textbf{Max-Utility CA}.

\section{Proposed Algorithms}\label{sec:algo}
In this section, we first present our super-node formation algorithm that identifies 
super-nodes 
and then present a greedy algorithm for max-cardinality and max-reward CA. 
Finally, we propose a local-search-based algorithm for max-utility CA, which exploits the structural property 
of the utility function and provides a performance guarantee. 

\subsection{Super-Node Formation}\label{sec:super_node_formation}
The super-node formation algorithm (Algorithm~\ref{algo:super_node_formation}) aims to identify a set of 
nodes that have channel $C$ available and that are within each other's CS/ED range.
As we can see, 
the main input is an undirected graph, where each vertex is a node with $C$ available, and each edge \rev{connects two} nodes \rev{that} are within each other's CS/ED range. 
\rev{Since} a super-node is a \textit{clique} (a subgraph where every two distinct vertices are connected),  
the first task is to find all such cliques.
In this work, we adopt the well-known Bron-Kerbosch algorithm \cite{bron1973algorithm},  a recursive backtracking algorithm that lists subsets of vertices that are cliques, and no listed subset can have any additional vertex without breaking its complete connectivity.
While a node can belong to multiple cliques, it randomly chooses a clique to join, since it cannot join two super-nodes at the same time.

\begin{algorithm}
\caption{Super-Node Formation}
\label{algo:super_node_formation}
\begin{algorithmic} [1]
\REQUIRE CS graph on channel $C$, activity index limit $\bar{\alpha}$
\ENSURE A set of super-NC pairs $U$ on channel $C$
\STATE Run Bron-Kerbosch algorithm to find cliques $\mathcal{Q}$.
\STATE	Each node in $V$ that belongs to multiple cliques chooses one to join (e.g., randomly).
\FORALL{clique $Q$ in $\mathcal{Q}$}
	\STATE	Run FFD to identify super-nodes $\mathcal{S}$ w.r.t. $\bar{\alpha}$.
	\STATE Add $(S,C)$ for each $S\in \mathcal{S}$ s.t. $|S|>1$ to $U$.
\ENDFOR
\RETURN $U$
\end{algorithmic}
\end{algorithm}
\normalsize

Although nodes in each clique (and any of its subsets) are able to coexist, the SAS may want to divide them into smaller groups for load balancing. 
Denote the \textit{activity index} of node $i$ as $\alpha_i > 0$, which is the estimated total number of channels it demands.
The activity index mapped to $C$ is $\alpha_i(C) = \min \left( \alpha_i/|C|, 1 \right) \in (0,1]$.
To avoid overcrowded super-nodes, the SAS can set a \textit{sum activity index limit} $\bar{\alpha}>0$ (e.g., $\bar{\alpha}=1.0$) for each super node. 
As a result, the task of grouping nodes (i.e., items with weights) into fewer super-radios (i.e., bins with capacity of $\bar{\alpha}$) becomes the well-known one dimensional \textit{bin packing problem} (BPP). 
In this work, we adopt a heuristic-based algorithm called \textit{first fit decreasing} (FFD)\cite{eilon1971loading}: it first creates a sequence of super-nodes (i.e., empty bins) and sorts nodes in non-increasing order of activity index; 
then it places each node into the lowest-indexed super-node with sufficient remaining space. 
Latest analysis has shown that FFD uses no more than $11/9$  OPT + $6/9$ bins, where OPT is the number of bins given by the optimal solution \cite{dosa2007tight}. 

\subsection{Algorithm for Max-Cardinality and Max-Reward CA}
\revv{In the max-cardinality CA (Eq.~(\ref{eq:max_cardinality_CA})), each vertex has a unit weight, while in the max-reward CA (Eq.~(\ref{eq:max_reward_CA})), each vertex has a weight equal to $(R(v)+\lambda |S(v)|)$. 
If we assume that each node has only one channel assignment, then the problems boil down to selecting an \textit{independent set} (IS), i.e., a set of vertices without any edges, that have the maximum sum of weights on general graphs, which is the classic \textit{maximum (weighted) independent set} (M(W)IS) problem.}

\revv{Since it is well known that both MIS and MWIS are NP-hard on general graphs\footnote{It has been shown that the problem may become easy on simple classes of graphs. For instance, MWIS can be solved in linear time in any tree graph. Polynomial time algorithms exist for MWIS in other classes of graphs including bipartite graphs, line graphs, circle graphs, claw-free graphs and planar graphs. See \cite{bandyapadhyay2014variant} for a brief summary.} \cite{halldorsson2000approximations,hastad1996clique}, it means that our problems are also NP-hard even when each node has only one channel assignment.} 
In this work, we adopt a heuristic-based algorithm 
called GMWIS (Algorithm~\ref{algo:GWMIS}) to obtain an approximate solution.
Denote by $\theta(G)$ the weight of a maximum IS, and by $A(G)$ the weight of the solution obtained by algorithm $A$.
The performance ratio is defined by $\rho_A = \inf_G \frac{A(G)}{\theta(G)}$.
It has been shown in \cite{sakai2003note} that GMWIS outputs an IS of weight at least $\sum_{v\in V} \frac{R(v)}{\delta_G(v)+1}$, where $\delta_G(v)$ is the vertex degree of $v$ and its performance ratio is $\frac{1}{\Delta_G}$, where $\Delta_G$ is the maximum vertex degree of $G$.
Note that the computational complexity of Algorithm~\ref{algo:GWMIS} is $O(|V|^2)$, since each iteration takes $O(|V|)$ operations and there are up to $|V|$ iterations.


\begin{algorithm}
\caption{Greedy algortihm for MWIS (GMWIS)}
\label{algo:GWMIS}
\begin{algorithmic} [1]
\REQUIRE Vertex-weighted undirected graph $G(V, E, R)$
\ENSURE A maximal independent set $I$ in $G$

\STATE Pick a vertex $v\in V$ that maximizes $\frac{R(v)}{\delta_{G}(v) + 1}$. 
\STATE Add $v$ to $I$ and remove $v$ and its neighbors from $V$.
\STATE Repeat steps 1 and 2 until all vertices in $V$ are removed.	
\RETURN $I$

\end{algorithmic}
\end{algorithm}

\vspace{-0.3cm}

\subsection{Proposed Algorithm for Max-Utility CA}
In this section, we first provide brief background on submodularity and matroids. 
Then we show that the max-utility CA problem in Eq.~(\ref{eq:max_utility_CA}) is matroid-constrained submodular maximization and proposed a polynomial-time algorithm with a provable performance guarantee.

\subsubsection{Background on Submodularity and Matroids}
Submodularity is a property of set functions that captures the \textit{diminishing returns} behavior, that is, adding a new element introduces greater incremental benefits, if there are fewer elements so far, and less, if there are more elements.
  
The formal definition is as follows.

\begin{mydef}[Submodularity]
Given a finite set $V$, a function $f:2^V\mapsto \mathbb{R}$ is submodular if, for any subsets $S,T \subseteq V$, 
\begin{equation*}
f(S) + f(T) \geq f(S\cap T) + f(S\cup T).
\end{equation*}
\end{mydef}

An equivalent definition is the following \cite{fujishige2005submodular}. 
A set function is submodular if, for any sets $S \subseteq T\subseteq V$ and any $v\in V\setminus T$, 
\begin{equation}
f(S \cup \{v\}) - f(S) \geq f(T \cup \{v\}) - f(T). \label{eq:equivalent_subomdular_def}
\end{equation}


%


\begin{mydef}[Matroid]\label{def:matroid}
Let $V$ be a finite set, and let $\mathcal{I}$ be a collection of subsets of $V$. 
A set system $\mathcal{M}=(V, \mathcal{I})$ is a matroid if the following three conditions hold: (i) $\emptyset \in \mathcal{I}$, (ii) if $B \in \mathcal{I}$, then $A \in \mathcal{I}$ for all $A \subseteq B$, and (iii) if $A,B\in \mathcal{I}$ and $|A|<|B|$, then there exists $v \in B\setminus A$ such that $(A \cup \{v\}) \in \mathcal{I}$.
\end{mydef}


Given a matroid $\mathcal{M}$, $\mathcal{I}(\mathcal{M})$ is called the set of \textit{independent sets} of $\mathcal{M}$.
A sub-class of matroids called \textit{partition matroids} is defined as follows.

\begin{mydef}[Partition Matroid]\label{def:partition_matroid}
Let $V$ be a finite set, and $V_1$, ..., $V_m$ be a partition of $V$, that is, a collection of sets such that $V_1 \cup ... V_m=V$ and $V_i\cap V_j = \emptyset$ for $i\neq j$. 
Let $k_1$, ..., $k_m$ be a collection of nonnegative integers. 
Define a set $\mathcal{I}$ by $A \in \mathcal{I}$ iff $|A \cap V_i| \leq k_i$ for all $i=1,...,m$. 
Then $\mathcal{M}=(V, \mathcal{I})$ is a matroid, called a partition matroid. 
\end{mydef}

\subsubsection{Matroid-Constrained Submodular Maximization}
As we can see, the constraint in the max-utility CA formulation in Eq.~(\ref{eq:max_utility_CA}) defines a partition matroid,
\begin{align}
\mathcal{M}=(V,~\mathcal{I}), \text{ where } \mathcal{I}&=\{I: |I \cap V_i| \leq 1, \forall i=1,2,...,n\} \nonumber \\
&\text{ and } \{V_i\} \text{ is a partition of } V, \label{eq:matroid_constraint}
\end{align}
and \rev{we can show that} the utility function in Eq.~(\ref{eq:utility_function}) is a submodular set function.  

\begin{mylemma}\label{lemma:submodular}
The utility function $U(\cdot)$ in Eq.~(\ref{eq:utility_function}) is submodular.  
\end{mylemma}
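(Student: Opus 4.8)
The plan is to prove submodularity directly from the equivalent characterization in Eq.~(\ref{eq:equivalent_subomdular_def}): namely, for any $S \subseteq T \subseteq V$ and any $v \in V \setminus T$, show that the marginal gain $U(S \cup \{v\}) - U(S) \geq U(T \cup \{v\}) - U(T)$. Since $U(I) = \sum_{w \in I} R(w) - \lambda \sum_{u,w \in I} P_{u,w}$ is a sum of a modular term (the reward part) and a quadratic term (the penalty part), I would decompose the argument accordingly. The reward term $\sum_{w \in I} R(w)$ is modular, so its marginal gain is exactly $R(v)$ regardless of whether we add $v$ to $S$ or $T$, contributing equality on both sides. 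Thus it suffices to verify the diminishing-returns inequality for the negated penalty term $g(I) := -\sum_{u,w \in I} P_{u,w}$.

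For the penalty term, I would compute the marginal gain explicitly. Adding $v$ to a set $I$ (with $v \notin I$) increases $\sum_{u,w} P_{u,w}$ by the sum of all penalties on edges between $v$ and $I$ in both directions, i.e.\ by $\Delta_I(v) := \sum_{u \in I} (P_{u,v} + P_{v,u})$ (plus possibly a self term $P_{v,v}$, which is zero or can be treated as a constant shift that cancels). Hence $g(I \cup \{v\}) - g(I) = -\Delta_I(v)$. Since $S \subseteq T$ and all penalties $P_{u,v}, P_{v,u} \geq 0$, we have $\Delta_S(v) \leq \Delta_T(v)$, which gives $-\Delta_S(v) \geq -\Delta_T(v)$, i.e.\ exactly the required inequality $g(S \cup \{v\}) - g(S) \geq g(T \cup \{v\}) - g(T)$. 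Multiplying by $\lambda \geq 0$ preserves the inequality, and adding back the equal reward contributions yields the claim.

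The only genuine subtlety — and the step I would be most careful about — is the bookkeeping of the double sum $\sum_{u,w \in I} P_{u,w}$: whether it ranges over ordered pairs, unordered pairs, or includes the diagonal, and making sure the marginal-gain expression is consistent with that convention. Because $P_{u,v}$ and $P_{v,u}$ are allowed to differ and $P_{u,v} = 0$ when $e(u,v) \notin E$, the natural reading is a sum over all ordered pairs of distinct vertices in $I$, in which case the marginal term is precisely $\Delta_I(v) = \sum_{u \in I}(P_{u,v} + P_{v,u})$; monotonicity of this quantity in $I$ is then immediate from nonnegativity. I would state this convention explicitly at the start of the proof so the rest is a one-line calculation. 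No deep structural fact is needed — submodularity here is essentially the statement that a nonnegative-weighted "coverage-like" quadratic penalty, entering with a minus sign, is submodular, and the proof is a direct verification.
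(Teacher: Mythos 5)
Your proposal is correct and follows essentially the same route as the paper's proof: both compute the marginal gain $U(I\cup\{v\})-U(I)=R(v)-\lambda\sum_{u\in I}(P_{u,v}+P_{v,u})$ and conclude from $S\subseteq T$ together with the nonnegativity of the penalties that the gain over $S$ dominates the gain over $T$, which is the equivalent diminishing-returns characterization in Eq.~(\ref{eq:equivalent_subomdular_def}). Your explicit remark about the ordered-pair convention in the double sum is a reasonable clarification but does not change the argument.
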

\begin{proof}
See Appendix~\ref{proof:lemma:submodular} for proof.
\end{proof} 

Hence, the problem $
\max\{U(I): I \in \mathcal{I}(\mathcal{M}) \}$ is an instance of matroid-constrained submodular maximization, \revv{which is known to be an NP-hard optimization problem \cite{lee2010maximizing}}.

\begin{algorithm}
\caption{Utility Maximization ({\tt UM})}
\label{algo:UM_algorithm}
\begin{algorithmic} [1]
\REQUIRE Partition matroid $\mathcal{M}=(V, \mathcal{I})$, utility function $U(\cdot)$, and parameter $\epsilon > 0$
\ENSURE $I$ -- Selected NC pairs

\STATE $I_1 \leftarrow$ {\tt LS} ($\mathcal{M}, U, \epsilon$)
\STATE $\mathcal{I}' \leftarrow \{I: |I \cap V'_i| \leq 1\}, \text{ where } V'_i = V_i \setminus I_1, \forall i = 1,2,...,n$
\STATE Define $\mathcal{M}'=(V', \mathcal{I}')$, where $V'=\cup_i V'_i$
\STATE $I_2 \leftarrow $ {\tt LS} ($\mathcal{M}', U, \epsilon$)
\RETURN $ I \leftarrow \arg\max_{I\in \{I_1, I_2\}} U(I)$


\algrule
\vspace{-0.1cm}
\textbf{Local Search Procedure} ({\tt LS}):
\vspace{-0.1cm}
\algrule

\REQUIRE Matroid $\mathcal{M}=(V, \mathcal{I})$, submodular function $f(\cdot)$, and parameter $\epsilon \geq 0$
\ENSURE A selected subset $I$

\STATE $I\leftarrow \emptyset$, $v \leftarrow \arg\max \{f(u)| u \in V\}$, $N \leftarrow |V|$. \label{algo:line:start}
\WHILE{$v\neq \emptyset$ and $f(I \cup \{u\}) > (1 + \frac{\epsilon}{N^2}) f(I)$}
	\STATE $I \leftarrow I \cup \{v\}$
	\STATE $v \leftarrow \arg\max \{f(I \cup \{u\}) - f(I) | u \in V\setminus I \text{ and } I \cup \{u\} \in \mathcal{I} \}$
\ENDWHILE \label{algo:line:end_of_greedy}

\WHILE{$true$}
	\IFTHEN {there exists $d \in I$ such that $f(I \setminus \{d\}) \geq (1+\frac{\epsilon}{N^2}) f(I)$} {$I \leftarrow I \setminus \{d\}$, \textbf{continue}} \label{algo:line:delete}
	\IFTHEN {there exists $a \in V \setminus I$ and $d \in I \cup \emptyset$ such that $I \setminus \{d\} \cup \{a\} \in \mathcal{I}$ and $f(I \setminus \{d\} \cup \{a\}) \geq (1 + \frac{\epsilon}{N^2}) f(I)$}
	{$I \leftarrow I \setminus \{d\} \cup \{a\}$, \textbf{continue}} \label{algo:line:add_swap}
	\STATE \textbf{break} 
\ENDWHILE
\RETURN $I$ \label{algo:line:end}
	
\end{algorithmic}
\end{algorithm}

\subsubsection{Proposed Algorithm}
Our algorithm {\tt UM} (Algorithm~\ref{algo:UM_algorithm}) is based on the (approximate) local search procedure {\tt LS} (Lines~\ref{algo:line:start}-\ref{algo:line:end}).
It \rev{quickly} finds an initial solution 
\rev{by iteratively adding the next element with the maximum incremental utility} (Lines~\ref{algo:line:start}-\ref{algo:line:end_of_greedy}) and then searches for the locally optimal solution under the local delete (Line~\ref{algo:line:delete}), add and swap (Line~\ref{algo:line:add_swap}) operations subject to the matroid constraint.
To achieve faster convergence, the parameter $\epsilon$ can be set to obtain an approximately locally optimal solution.
Note that {\tt LS} is called twice in {\tt UM}: it first obtains $I_1$ for the original matroid $\mathcal{M}$ and then $I_2$ for the new matroid $\mathcal{M}'$, which corresponds to the partition $\{V_i'\}$ of $V'$ (i.e., $V \setminus I_1$). 
The set, either $I_1$ or $I_2$, that yields a greater utility is returned. 

Theorem \ref{theorem:bound} shows that {\tt UM} (Algorithm~\ref{algo:UM_algorithm}) provides a performance guarantee. 

\begin{mytheorem}\label{theorem:bound}
Algorithm~\ref{algo:UM_algorithm} returns a solution $I$ with the following performance guarantee
\begin{equation}
U(I) \geq \frac{1}{(4+2\epsilon)} [ U(OPT) + 2 U_{\min}],
\end{equation}
where $OPT$ is the optimal solution to $\max\{U(I): I \in \mathcal{I}(\mathcal{M})\}$, and $U_{\min} = \min \{U(I): I \in \mathcal{I}(\mathcal{M})\}$. 
If $U(\cdot)$ is non-negative (i.e., $U_{\min} \geq 0$), we have $U(I)\geq \frac{1}{(4+2\epsilon)} U(OPT)$.
\end{mytheorem}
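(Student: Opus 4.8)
The plan is to follow the template of local search for (possibly non-monotone) submodular maximization subject to a matroid constraint, in the spirit of Lee--Mirrokni--Nagarajan--Sviridenko. By Lemma~\ref{lemma:submodular} the objective $U(\cdot)$ is submodular, and by (\ref{eq:matroid_constraint}) the feasible region is the family of independent sets of a partition matroid (all caps equal to $1$); thus $\max\{U(I):I\in\mathcal{I}(\mathcal{M})\}$ is precisely matroid-constrained submodular maximization. Note from (\ref{eq:utility_function}) that $U(\varnothing)=0$, but $U$ is neither monotone nor non-negative in general (rewards are offset by penalties) -- this is the only place the $U_{\min}$ term will enter. The workhorse of the argument is an \emph{exchange inequality} satisfied by the output of the local-search procedure {\tt LS}.

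\textbf{Step 1 (local-optimality exchange inequality).} First I would show that whenever {\tt LS} terminates on a matroid $\mathcal{M}''=(V'',\mathcal{I}'')$ with output $I$, the set $I$ is approximately locally optimal under the delete, add, and swap moves, and hence, for \emph{every} $C\in\mathcal{I}''$,
\[
(2+\epsilon)\,U(I)\ \ge\ U(I\cup C)+U(I\cap C).
\]
To prove this I would invoke the matroid exchange property (Brualdi's bijection / a system of distinct representatives) to pair each $c\in C\setminus I$ with a distinct $d_c\in (I\setminus C)\cup\{\varnothing\}$ such that $I\setminus\{d_c\}\cup\{c\}\in\mathcal{I}''$. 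The stopping rule of {\tt LS} then supplies the non-improvement inequalities $U(I\setminus\{d_c\}\cup\{c\})\le(1+\tfrac{\epsilon}{N^2})U(I)$ for each swap and $U(I\setminus\{d\})\le(1+\tfrac{\epsilon}{N^2})U(I)$ for each delete; summing these $O(N)$ inequalities and repeatedly applying submodularity of $U$ to telescope the marginals yields the displayed bound, with the per-step $\epsilon/N^2$ slack accumulating to only an $O(\epsilon)$ multiplicative loss.

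\textbf{Step 2 (combine the two phases).} I would then apply Step~1 twice: to $I_1$ with $C=OPT$ (legal since $OPT\in\mathcal{I}(\mathcal{M})$), and to $I_2$ with $C=OPT\setminus I_1$ (legal since $|(OPT\setminus I_1)\cap V_i'|\le|OPT\cap V_i|\le 1$, so $OPT\setminus I_1\in\mathcal{I}(\mathcal{M}')$); crucially $I_1\cap I_2=\varnothing$ because $I_2\subseteq V\setminus I_1$. Adding the two inequalities and simplifying using $I_1\cup OPT=I_1\cup(OPT\setminus I_1)$, $I_1\cap OPT$, and $I_2\cap(OPT\setminus I_1)=I_2\cap OPT$, then applying submodularity once more both to merge the two ``$\cup\,OPT$'' contributions and to recover $U(OPT)$ from $U(OPT\setminus I_1)+U(OPT\cap I_1)\ge U(OPT)+U(\varnothing)$, gives $(2+\epsilon)\bigl(U(I_1)+U(I_2)\bigr)\ \ge\ U(OPT)+(\text{residual terms})$. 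Each residual term is $U$ evaluated at $\varnothing$ (equal to $0$) or at a subset of $OPT$ (hence an independent set, so at least $U_{\min}$); bounding them below by $U_{\min}$ and using $U(I)=\max\{U(I_1),U(I_2)\}\ge\tfrac12\bigl(U(I_1)+U(I_2)\bigr)$ yields $U(I)\ge\frac{1}{4+2\epsilon}\bigl[U(OPT)+2U_{\min}\bigr]$, and the non-negative case follows since then $U_{\min}\ge 0$.

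\textbf{Main obstacle.} The delicate part is Step~1: setting up the correct system of distinct representatives for \emph{independent} sets that need not be bases (so a $c\in C\setminus I$ may be inserted with no removal, which is exactly why the swap allows $d\in I\cup\varnothing$), and then ordering the swap/delete inequalities so that submodularity telescopes them precisely into $U(I\cup C)+U(I\cap C)$ while keeping the approximation slack $O(\epsilon)$. A secondary subtlety, specific to our non-monotone $U$, is the bookkeeping in Step~2 of which residual sets are guaranteed to be matroid-independent; this is what forces the $U_{\min}$ correction that is absent from the classical non-negative analysis.
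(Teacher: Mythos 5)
Your proposal is correct and follows essentially the same route as the paper: the paper cites the $(2+\epsilon)$-approximate local-optimality inequality $(2+\epsilon)U(I)\geq U(I\cup C)+U(I\cap C)$ directly from Lee et al.\ \cite{lee2010maximizing} (your Step~1 merely re-derives it), and your Step~2 — applying it to $I_1$ with $C=OPT$ and to $I_2$ with $C=OPT\setminus I_1=OPT\cap V'$, merging via submodularity, using $U(\emptyset)=0$, bounding residuals by $U_{\min}$, and picking up the factor $2$ from $U(I)=\max\{U(I_1),U(I_2)\}$ — is exactly the paper's argument. (Your parenthetical that every residual is a subset of $OPT$ is slightly off, since $U(I_1\cup I_2\cup OPT)$ is not, but the paper's own proof bounds that same term by $U_{\min}$ in the same way.)
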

\begin{proof}
See Appendix~\ref{proof:theorem:bound} for proof. 
\end{proof}

\begin{myproposition}\label{proposition:complexity}
Algorithm~\ref{algo:UM_algorithm} is in polynomial time with runtime bounded by $O(\frac{1}{\epsilon} N^3 \log N)$. 
\end{myproposition}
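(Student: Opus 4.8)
The plan is to bound the running time of Algorithm~\ref{algo:UM_algorithm} by analyzing a single invocation of the local-search procedure {\tt LS} and observing that {\tt UM} makes only two such calls plus $O(N)$ extra bookkeeping (constructing $\mathcal{M}'$ and comparing $U(I_1)$ with $U(I_2)$). Since $\mathcal{M}'$ is again a partition matroid with all capacities equal to one, on a ground set of size at most $N$, it suffices to show that {\tt LS}, run on any such matroid with $|V|=N$, terminates in $O(\tfrac1\epsilon N^3\log N)$ time. I would split this into (i) a bound on the number of value-improving moves performed by {\tt LS} (the greedy additions in Lines~\ref{algo:line:start}--\ref{algo:line:end_of_greedy}, and the deletions, additions and swaps in Lines~\ref{algo:line:delete}--\ref{algo:line:add_swap}), and (ii) a bound on the work spent per move.

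\textbf{Number of moves.} Note that $U(\emptyset)=0$, and by Eq.~(\ref{eq:utility_function}) with $P_{v,v}=0$ (every edge of the conflict graph joins two \emph{distinct} NC pairs) we have $U(\{v\})=R(v)>0$ for every $v\in V$; moreover, discarding the non-negative penalty term in Eq.~(\ref{eq:utility_function}), $U(I')\le\sum_{u\in I'}R(u)\le |I'|\,W\le N\,W$ for every $I'\subseteq V$, where $W:=\max_{v\in V}R(v)>0$. After its first greedy iteration {\tt LS} has $U(I)=W$, and every step it subsequently executes strictly increases $U(I)$ by a factor of at least $1+\epsilon/N^2$ (exactly the guard enforced on Lines~\ref{algo:line:start}--\ref{algo:line:end_of_greedy}, \ref{algo:line:delete}, and~\ref{algo:line:add_swap}). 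Hence, if $k$ further moves are made, $(1+\epsilon/N^2)^k\,W\le N\,W$, i.e.\ $k\le \ln N/\ln(1+\epsilon/N^2)$; using $\ln(1+x)\ge x/2$ for $x\in(0,1]$ (we may assume $\epsilon\le 1$, the regime of interest) gives $k=O(\tfrac1\epsilon N^2\log N)$.

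\textbf{Work per move.} A greedy iteration recomputes an $\arg\max$ over at most $N$ candidate elements, i.e.\ $O(N)$ evaluations of $U$. A local-search iteration first scans the $|I|\le N$ candidate deletions (Line~\ref{algo:line:delete}) and then the candidate additions/swaps (Line~\ref{algo:line:add_swap}); a naive enumeration of all pairs $(a,d)$ would cost $\Theta(N^2)$, but because $\mathcal{M}$ is a partition matroid with unit capacities, the element $d$ to remove is \emph{determined} by the element $a$ to add: for $a\in V_j\setminus I$, the set $I\setminus\{d\}\cup\{a\}$ can be independent only if $d$ is the unique member of $I$ in $V_j$ (or $d=\emptyset$ when $I\cap V_j=\emptyset$). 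Thus only $O(N)$ candidate moves need be tested per iteration, at $O(N)$ evaluations of $U$ (each $O(1)$ in the value-oracle model, or $O(N)$ arithmetic operations if maintained incrementally). Combining, one run of {\tt LS} costs $O(\tfrac1\epsilon N^2\log N)\cdot O(N)=O(\tfrac1\epsilon N^3\log N)$, the greedy phase alone being $O(N)\cdot O(N)=O(N^2)$ and hence absorbed. Adding the two {\tt LS} calls and the $O(N)$ overhead of {\tt UM} gives the claimed $O(\tfrac1\epsilon N^3\log N)$ bound.

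\textbf{Main obstacle.} The crux is part (ii): implemented carelessly, the swap step costs $\Theta(N^2)$ per iteration and yields only an $O(\tfrac1\epsilon N^4\log N)$ bound, so the argument must exploit that all matroid capacities are one, which pins down the deleted element once the added element is fixed. A minor but necessary point is that the multiplicative-growth argument of part (i) applies only because $U$ remains strictly positive after the first greedy addition (which rests on $R>0$ and on the conflict graph being loopless) and because $U(I')\le NW$ for all $I'$, which here follows directly from the explicit form of $U$ in Eq.~(\ref{eq:utility_function}).
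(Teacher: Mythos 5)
Your proof is correct and follows essentially the same route as the paper's: bound the number of value-improving moves by $O(\frac{1}{\epsilon}N^2\log N)$ using the $(1+\epsilon/N^2)$ multiplicative guard together with a factor-$N$ gap between the initial and maximal objective values, then charge $O(N)$ objective evaluations per move and absorb the two {\tt LS} calls. The only differences are minor: the paper derives the factor-$N$ gap from submodularity ($U(OPT)\leq N\cdot U(\{v_1\})$) rather than from the explicit form of $U$ as you do, and you additionally justify the $O(N)$ evaluations per swap iteration via the unit-capacity partition-matroid structure, a point the paper asserts without elaboration.
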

\begin{proof}
See Appendix~\ref{proof:proposition:complexity} for proof. 
\end{proof}
\section{Evaluation}
\label{sec:evaluation}
We \res{implemented} the proposed algorithms in MATLAB and compare their performance against baseline algorithms through extensive simulations. 
A real-world Wi-Fi hotspot location dataset in the New York City \cite{NYCWiFiLocation} is used to simulate \rev{a dense outdoor scenario}. Our results demonstrate the advantages of the proposed graph representation and algorithms for dynamic CA in the CBRS.

\subsection{Evaluation of Max-Cardinality CA}
For PA CA, we evaluate the proposed algorithm based on GMWIS (Algorithm~\ref{algo:GWMIS}) and compare it against a baseline algorithm called Non-Preemptive Sum Multi-Coloring (npSMC) \cite{bar2000sum}.
\res{To the best of our knowledge, npSMC is the only existing algorithm that explicitly considers channel contiguity, applicable to the PC CA scenario in the CBRS band.}

\subsubsection{Baseline}
The algorithm npSMC was proposed in \cite{bar2000sum} for scheduling dependent jobs on a graph, where each vertex $v$ is a job with a length or execution time $x(v)$ (i.e., required number of colors) and each edge indicates two dependent jobs that cannot be scheduled at the same time. 
Given a certain amount of time units or colors, the objective is to color as many vertices as possible, such that each vertex is assigned a desired amount of distinct contiguous colors and adjacent vertices are assigned disjoint sets of colors.

Given a graph $G(V,E,x )$, npSMC first puts vertices of the same length to the same group by adding an edge for each pair of vertices of different lengths, that is, $E'=E\cup \{(u,v):x(u)\neq x(v)\}$. 
This ensures that at any given time, vertices from the same group are being colored. 
Then it finds a maximal independent set (e.g., Algorithm~\ref{algo:GWMIS}) in the new graph $G'=(V,E')$ and color the selected jobs to completion non-preemptively 
in each iteration, until colors are exhausted.

\subsubsection{Setup}
We consider a square area of width $m$ that consists of $m^2$ census tracts, as shown in Fig.~\ref{fig:example_tract_topo}.
We generate randomly located circular areas of radius $r_s$ and consider census tracts that overlap with a circular area as a service area for PA CA purposes.
The number of PALs \rev{held by the same PA licensee} for each service area is uniformly selected from $[1,4]$ at random. 
To simulate the worst case, we generate as many service areas as possible (up to $1000$ trials) subject to the seven-PALs-per-census-tract limit. 
Since the maximum possible size of a solution $I$ is equal to the total number of service areas $n$, we use the metric called \textit{ratio of service areas served}, $p=|I|/n$ to measure the performance for PA CA.

\begin{figure}[ht!]
	\vspace{-0.3cm}
	\centering
	\subfloat[]{\includegraphics[trim=0 .5cm 0 0,clip=true,width=.5\columnwidth]{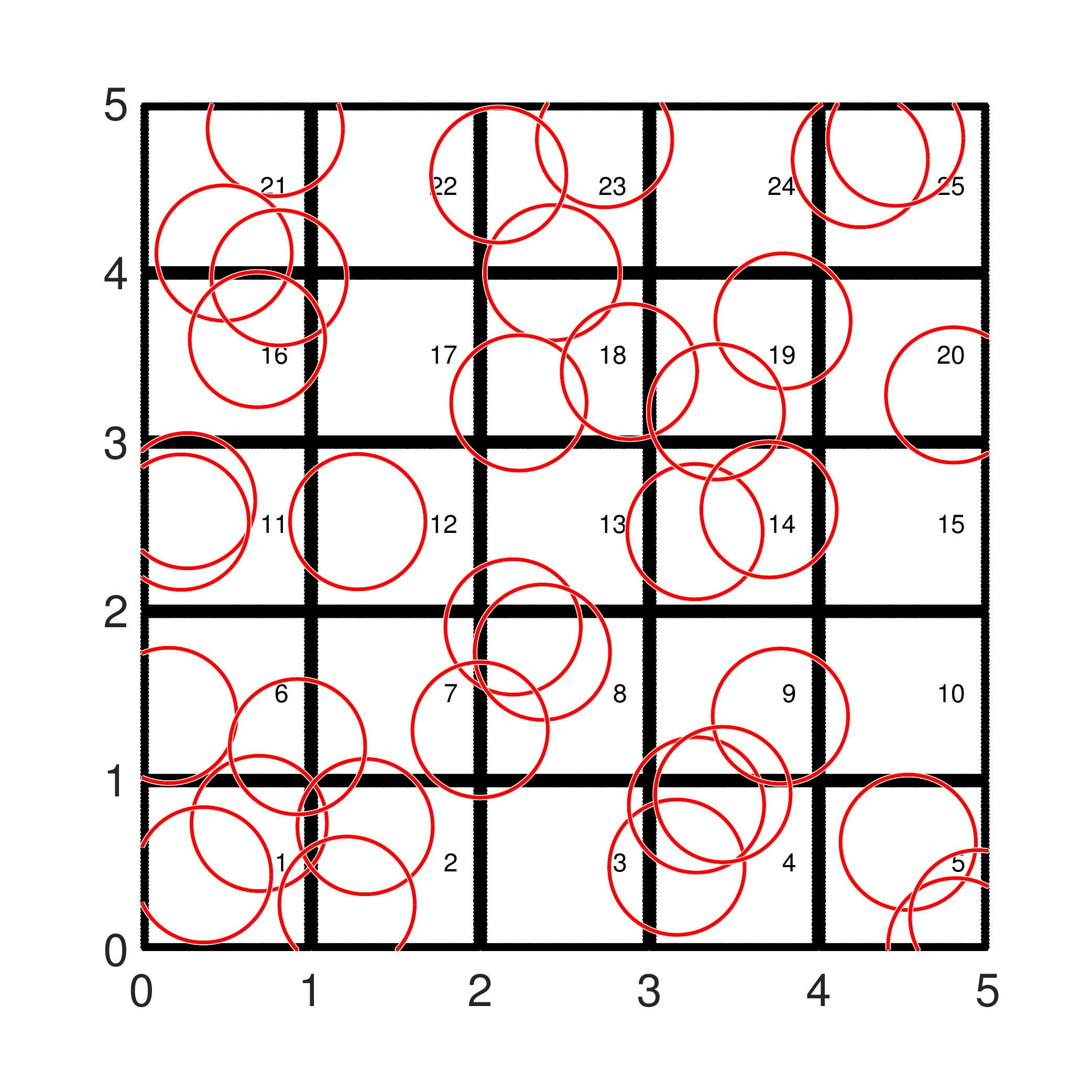}
		\label{fig:example_tract_topo}}
	\subfloat[]{\includegraphics[trim=0 -2cm 0 0,clip=true,width=.46\columnwidth]{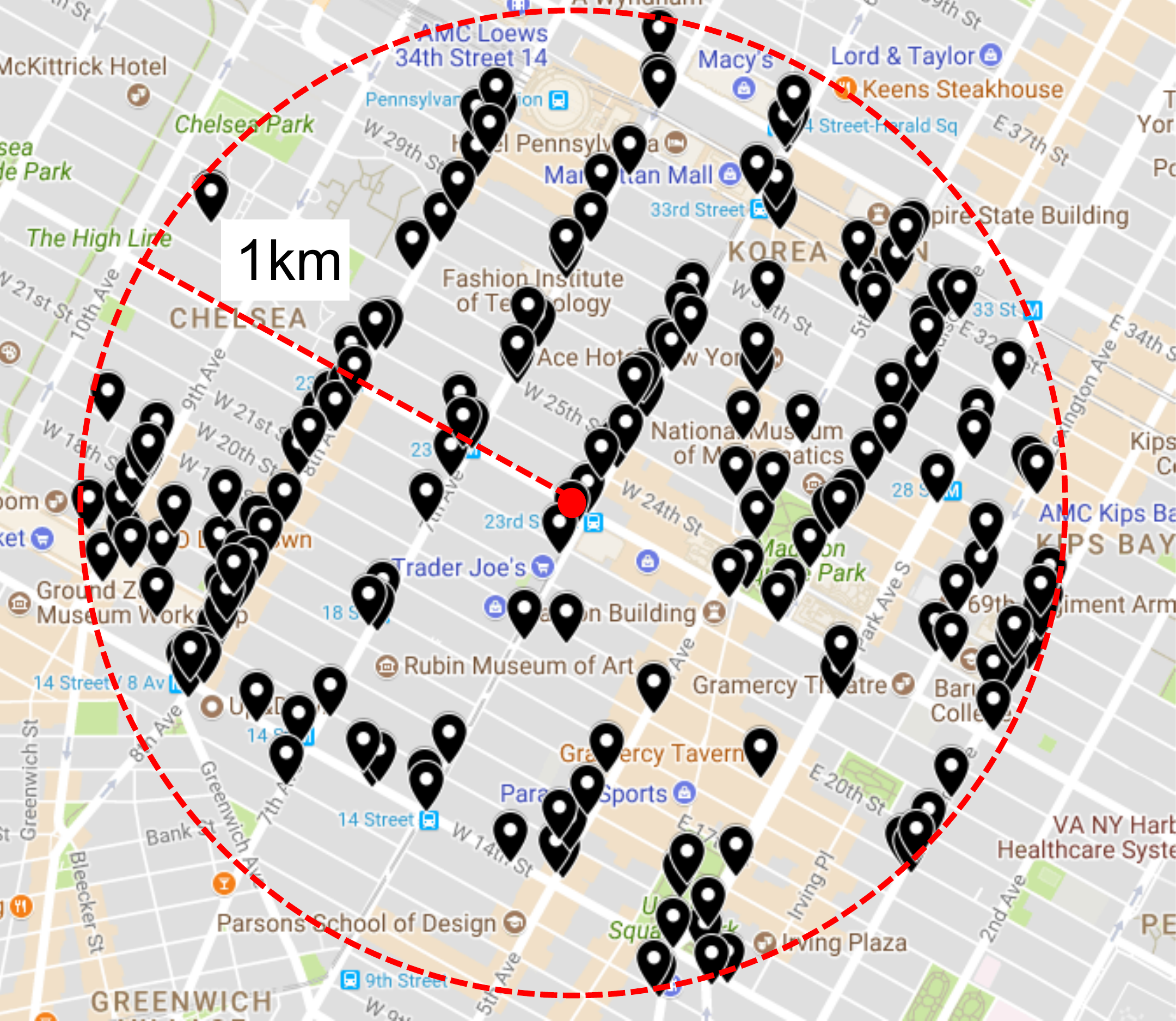}
		\label{fig:NYC_WiFi_locations}}
	\caption{(a) Example of census tracts for PA CA. The set of census tracts that overlap with a red circle are treated as a service area for a PA licensee. In this example, each circle is of radius $0.4$ and each service area contains up to four census tracts. (b) Examples of Wi-Fi hotspot locations in NYC treated as CBSDs locations for GAA CA. In this example, there are $190$ GAA CBSDs inside the circle centered at the (randomly selected) location $(40.74,-73.99)$ with a radius of $1$ km. }
\end{figure}

\subsubsection{Results}
To \rev{simulate} cities of different sizes, we set $r_s$ to $1$ and vary $m$ from $5$ to $30$ with a step of $5$.
For instance, with an optimal size of $4,000$ people for a census tract, $100$ census tracts ($m=10$) corresponds to a medium-sized city.
As mentioned in Section \ref{sec:PA_channel_assignment}, there are a total of $10$ PAL channels. 
Results are averaged over $100$ iterations. 
In each iteration, the same seed is used for the baseline and proposed algorithms for fair comparison. 

\begin{figure}[!t]
	\centering
	\subfloat[]{\includegraphics[trim=0 -.5cm 0 0,clip=true,width=.47\columnwidth]{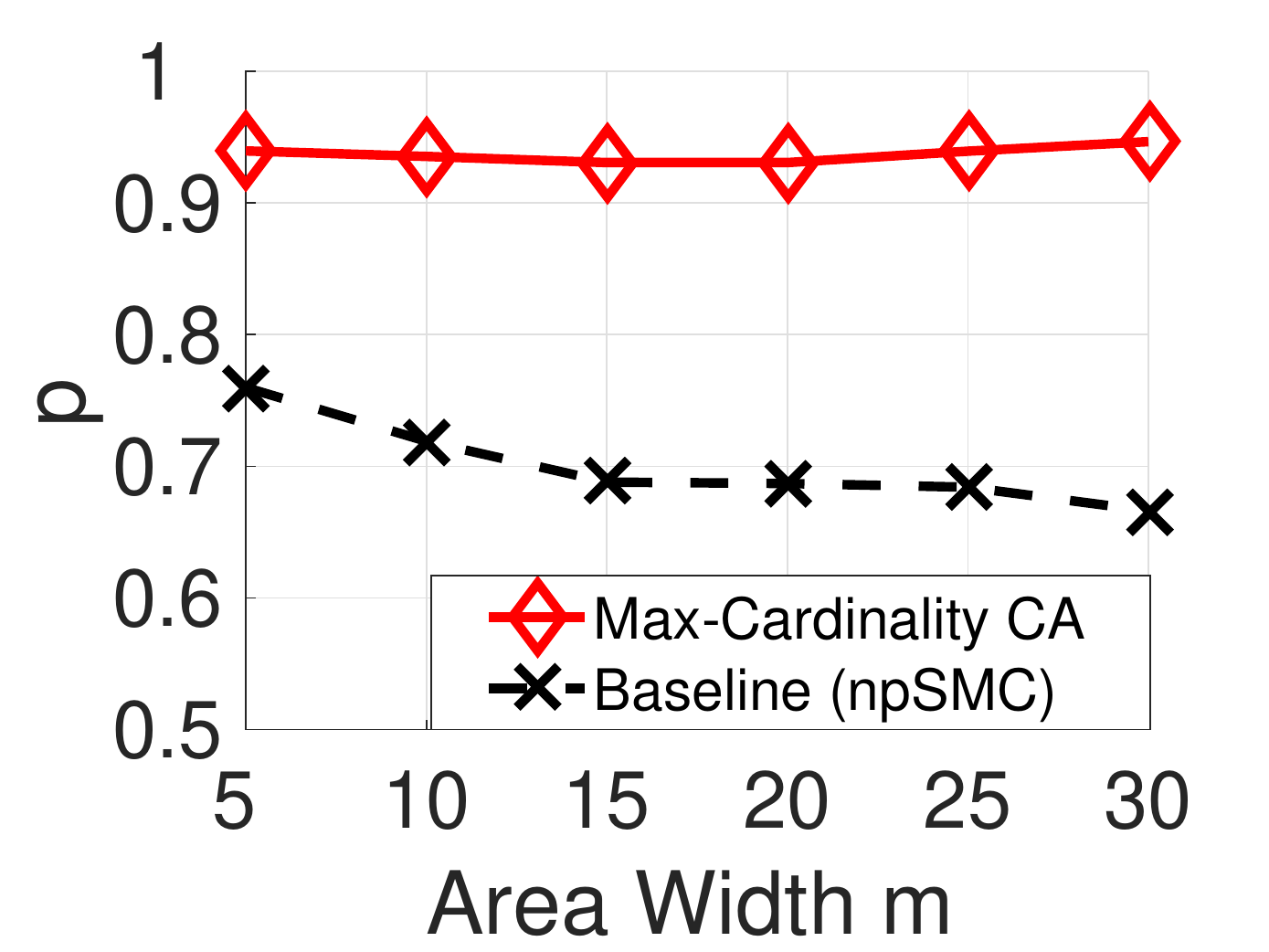}
		\label{fig:PA_CA_impact_of_area_width}}
	\subfloat[]{\includegraphics[width=.49\columnwidth]{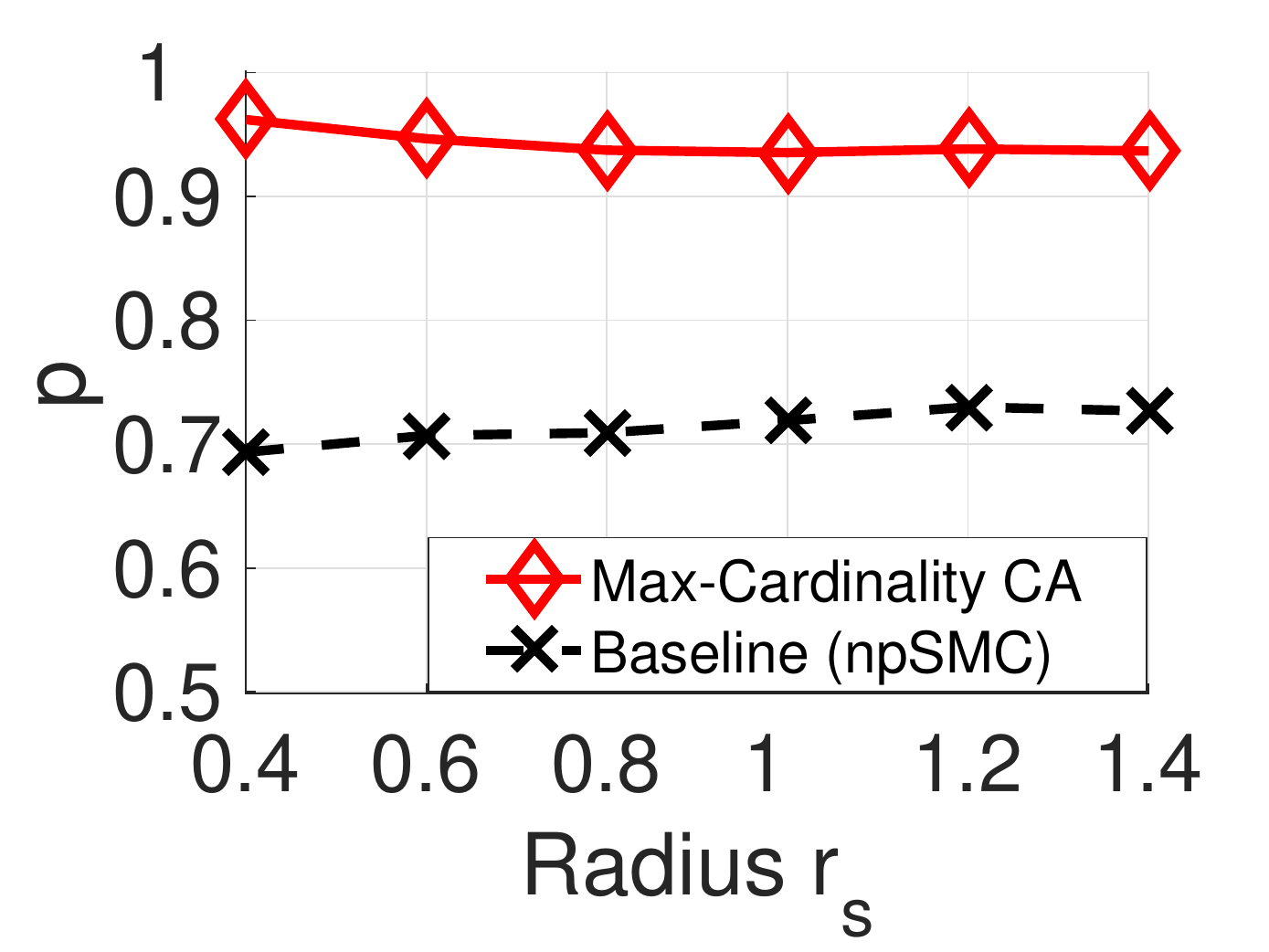}
		\label{fig:PA_CA_impact_of_radius}}
	\caption{Performance of the proposed and baseline algorithms vs. (a) area width and (b) radius. 
	\revv{The proposed algorithm consistently serves over 93.0\% of the PA demands for different area widths and radii with significant improvements over the baseline algorithm.}
	}
	\label{fig:PA_CA_result}
\end{figure}

As shown in Fig.~\ref{fig:PA_CA_impact_of_area_width}, the proposed algorithm serves $93.7\%$ of service areas on average, as compared to $70.1\%$ achieved by npSMC, with an improvement of $33.7\%$. 
We observe that due to the grouping in npSMC, service areas with larger demands tend to have lower priorities and thus often obtain no channels, whereas the proposed algorithm considers all service areas equally.

We then set $m$ to $10$ and vary $r_s$ from $0.4$ to $1.4$.
\rev{As shown in Fig.~\ref{fig:PA_CA_impact_of_radius}, the average $p$ is $94.3\%$ and $71.4\%$ for the proposed algorithm and npSMC, respectively. The improvement is as high as $32.0\%$. }
\rev{To summarize, our results show that the proposed algorithm is able to consistently meet more than $93.0\%$ of PA demands with a significant improvement over the baseline algorithm.}

\subsection{Evaluation of Max-Reward CA} \label{sec:evaluation_max_reward_CA}
In this section, we evaluate the proposed algorithm (Max-Reward) based on GMWIS (Algorithm~\ref{algo:GWMIS}) \rev{for binary GAA CA and compare it against the Max-Revenue Algorithm (MRA) in \cite{subramanian2008near}.}
Note that npSMC is not applicable, \rev{as it requires the same set of contiguous available channels at each node and cannot handle flexible demands.} 
We also evaluate the proposed algorithm 
with different settings \rev{to study the impact of reward function and coexistence awareness.}

\subsubsection{Baseline} 
\res{MRA was proposed for bidding-based spectrum allocation in \cite{subramanian2008near}. It selects a set of non-conflicting NC pairs 
so as to maximize the total revenue, which corresponds to the total reward in our case. 
It is a greedy algorithm that iteratively selects the next non-conflicting NC pair with the maximum  incremental revenue. 
Note that MRA is coexistence unaware and its performance is not affected by the choice of the reward function.}

\subsubsection{Setup}
\rev{We consider a scenario where the SAS is serving a circular region of radius $r$ (in km) that is randomly located in the densely populated Manhattan area of NYC, as illustrated in  Fig.~\ref{fig:NYC_WiFi_locations}.}
\rev{We import outdoor Wi-Fi hotspot locations within the circular region from a publicly available dataset \cite{NYCWiFiLocation} and treat them as GAA nodes.
	
In order to create location-dependent channel availability for GAA nodes, we consider two PA licensees that are assigned with CH $1$-$4$ and CH $5$-$7$, respectively, and generate $10$ randomly deployed PA nodes for each licensee.}
\rev{A channel is considered available for a GAA node, if its $-80$ dBm/$10$ MHz interference contour\footnote{Strictly speaking, $-80$ dBm/$10$ MHz is the limit for the aggregate co-channel interference, and a lower limit may be chosen for pairwise co-channel interference. Nevertheless, the choice of this limit does not affect our evaluation, since the same set of available channels are used as input for both the proposed and baseline algorithms. } does not overlap with any PA node's default PPA (Section \ref{sec:PA_channel_assignment}).}
\rev{Each GAA node has a demand set $\mathcal{D}=\{1,2,3,4\}$ and an activity index $\alpha$ in $U[0,4]$.} 
\revv{In this work, we adopt the PA interference protection rules for determining the interference relationship between two GAA nodes, that is,}
a GAA node $j$ is said to be interfering with $i$, if its interference at a location inside $i$'s $-96$ dBm/$10$ MHz service contour 
is higher than $-80$ dBm/$10$ MHz. 
The CS/ED threshold \rev{is set to $-75$ dBm/$10$ MHz (adapted from $-72$ dBm/$20$ MHz in \cite{etsi2016lte})}. 
The transmit power is $30$ dBm for each node, and transmit and receive antenna heights are $3$ and $1.5$ meters, respectively \cite{fcc2016}. 
We adopt the widely used COST-231 Hata \cite{damosso1999digital} as the propagation model.
Two metrics are adopted to measure the performance: 1) \textit{percentage of nodes served}, denoted as $p_1$,
and 2) \textit{percentage of demands served}, denoted as $p_2$, i.e., the ratio between the total number of assigned channels and the total demand. 

\begin{figure}[!t]
	\centering
	\subfloat[]{\includegraphics[width=.48\columnwidth]{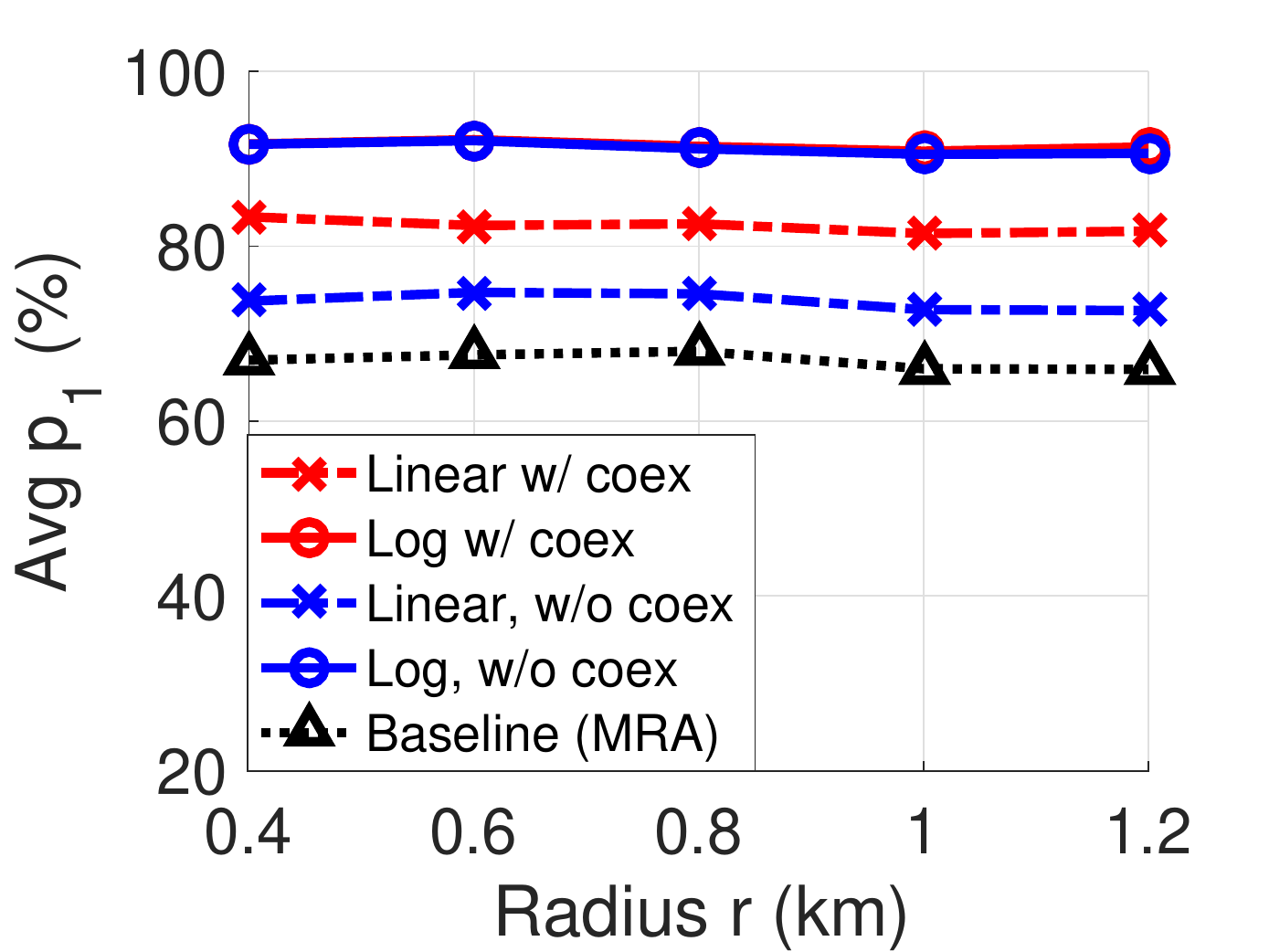}
		\label{fig:max_reward_CA_impact_of_n_p1}}
	\subfloat[]{\includegraphics[width=.48\columnwidth]{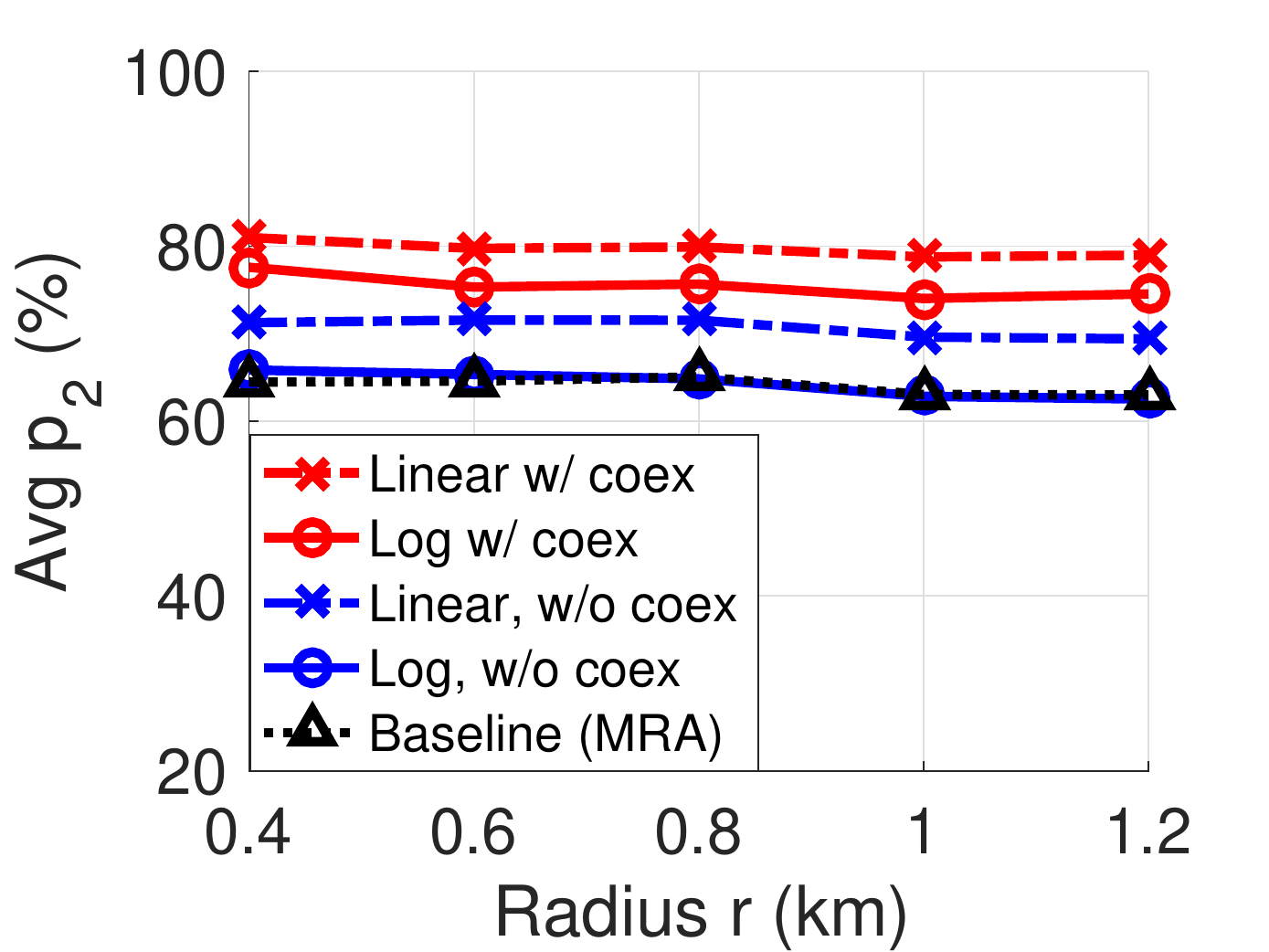}
		\label{fig:max_reward_CA_impact_of_n_p2}}
	\caption{Performance of the proposed and baseline algorithms in terms of average $p_1$ in (a) and $p_2$ in (b) as function of $n$. 
	\revv{The proposed algorithms outperform the baseline algorithm, and this advantage is further enhanced by coexistence awareness.}}
	\label{fig:max_reward_CA_impact_of_n}
\end{figure}

\subsubsection{Results}
\rev{In order to study the performance of Max-Reward for regions with different sizes, we vary $r$ from $0.4$ km to $1.2$ km.
The trade-off parameter $\lambda$ and the sum activity index limit $\bar{\alpha}$ are set to $0$ and $1$, respectively. 
Results are averaged over $30$ iterations.}

\rev{As shown in Fig.~\ref{fig:max_reward_CA_impact_of_n}, the size of the SAS's service region does not have a significant impact on the performance of a CA algorithm.
Without coexistence awareness, Max-Reward-Linear and Max-Reward-Log  achieve an average $p_1$ larger than $72.6\%$ and $90.5\%$ in all cases and outperform the baseline algorithm by $10.2\%$ and $36.4\%$ on average, respectively. 

On the other hand, Max-Reward-Log has very close performance with the baseline algorithm in terms of average $p_2$, while Max-Reward-Linear outperforms the baseline  by $10.4\%$ on average. 
The above behaviors of Max-Reward are not surprising, since the log reward function encourages assigning a channel to nodes with fewer channels, leading to a larger $p_1$ but a smaller $p_2$.
	
We can also observe that coexistence awareness effectively improves the performance  of Max-Reward-Linear in terms of average $p_1$ by $11.7\%$ but has little effects on Max-Reward-Log. On the other hand, it is able to improve both Max-Reward-Linear and Max-Reward-Log in terms of average $p_2$ by $12.8\%$ and $17.4\%$ on average, respectively. }

\begin{figure}[!ht]
	\centering
	\subfloat[]{\includegraphics[width=.48\columnwidth]{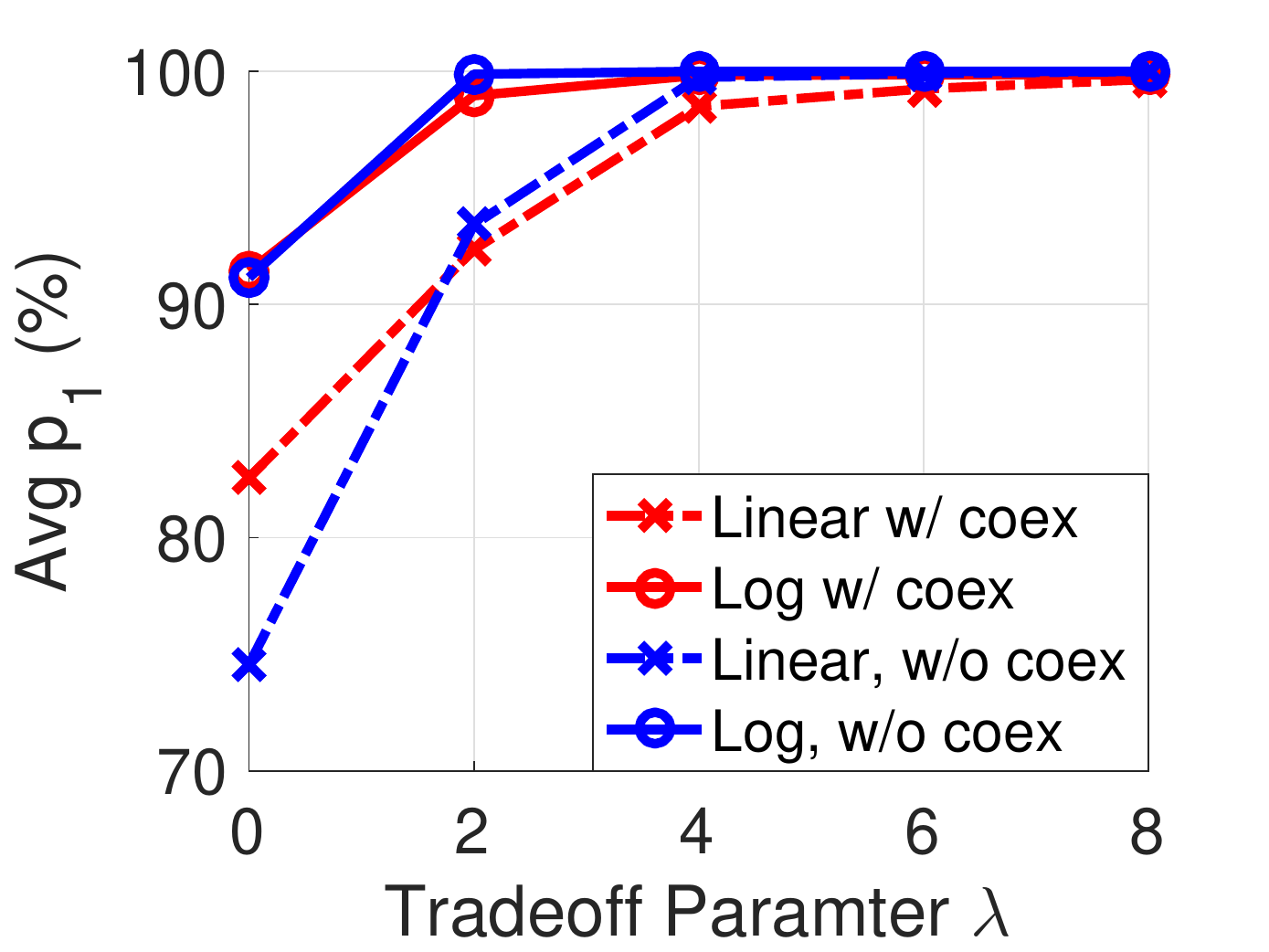}
		\label{fig:max_reward_CA_impact_of_lambda_p1}}
	\subfloat[]{\includegraphics[width=.48\columnwidth]{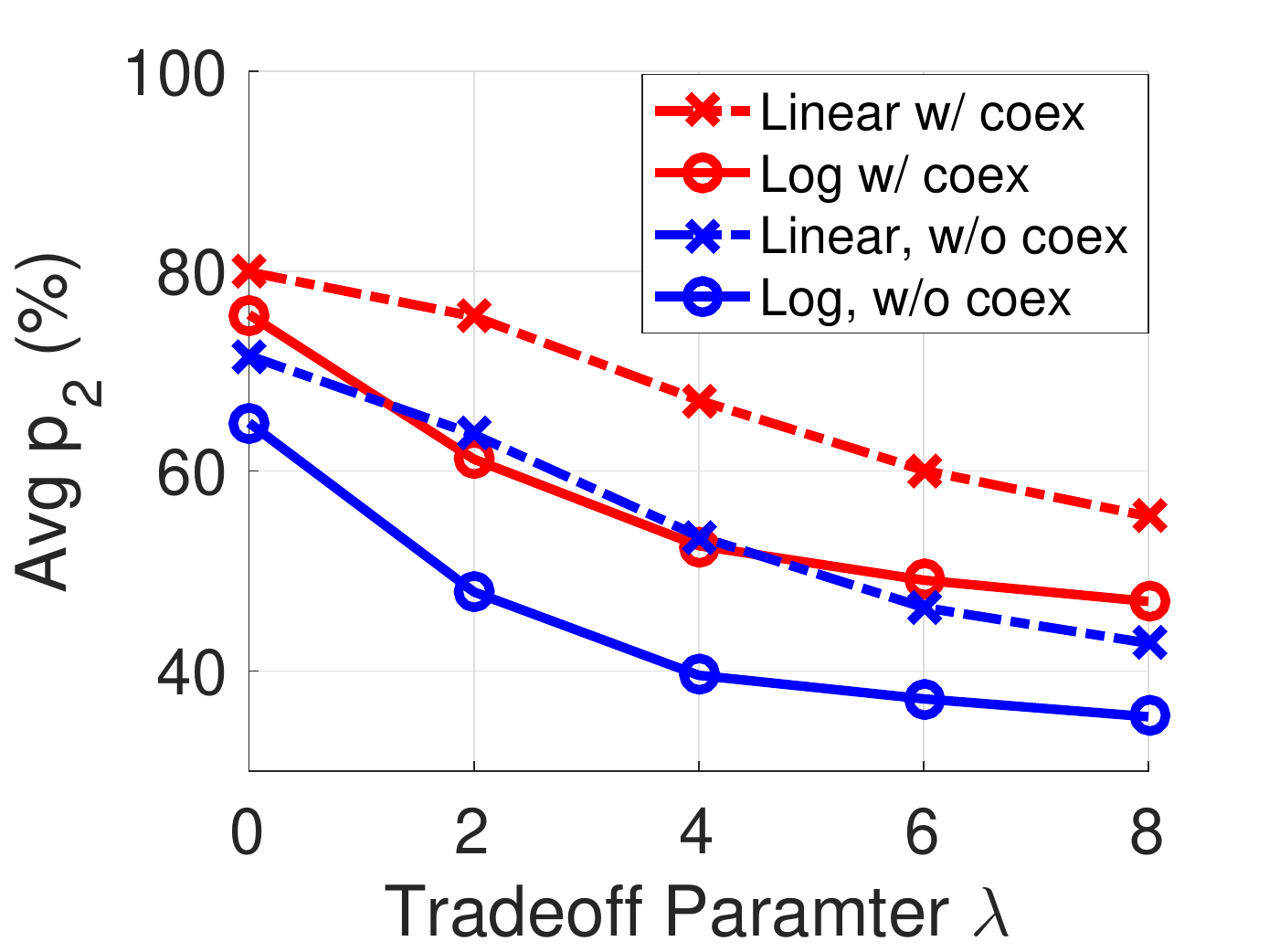}
		\label{fig:max_reward_CA_impact_of_lambda_p2}}\\
	\subfloat[]{\includegraphics[width=.48\columnwidth]{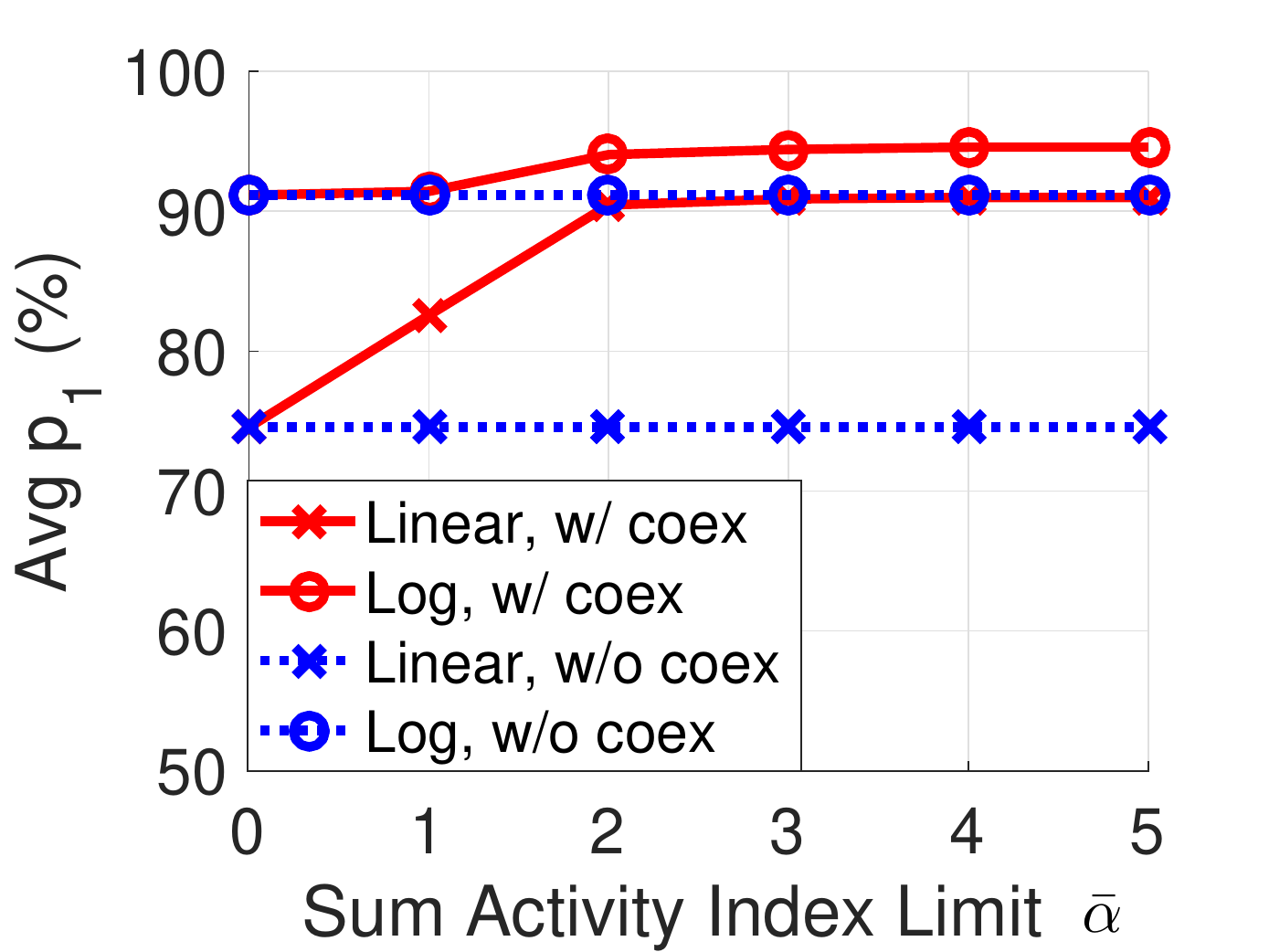} \label{fig:max_reward_CA_impact_of_alpha_limit_p1}}
	\subfloat[]{\includegraphics[width=.48\columnwidth]{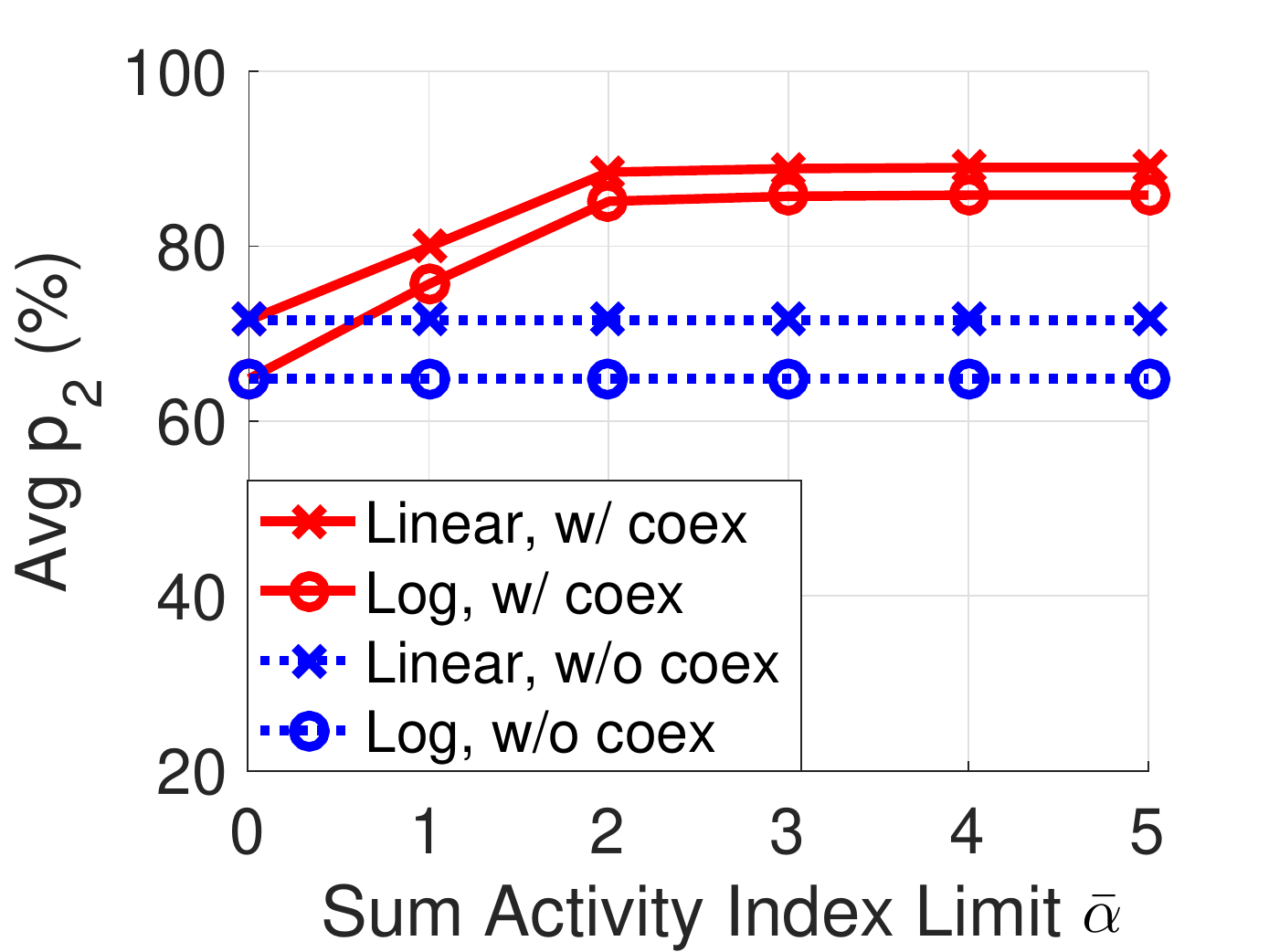} \label{fig:max_reward_CA_impact_of_alpha_limit_p2}}
	\caption{Impact of $\lambda$ and $\bar{\alpha}$ on the performance of the proposed algorithm in terms of average $p_1$ and $p_2$. 
	\revv{There exists a trade-off between $p_1$ and $p_2$ when choosing $\lambda$. 
	In addition, increasing $\bar{\alpha}$ can improve both $p_1$ and $p_2$, but such improvements no longer exist, when $\bar{\alpha}$ is large enough and all nodes in a clique are considered as a single super-node.}
	}
	\label{fig:max_reward_CA_impact_of_lambda}
\end{figure}

\subsubsection{Impact of $\lambda$ and $\bar{\alpha}$} \rev{In order to study the impact of $\lambda$, we set $r=0.8$, $\bar{\alpha}=1$, and vary $\lambda$ from $0$ to $8$. 
As illustrated in Fig.~\ref{fig:max_reward_CA_impact_of_lambda_p1} and \ref{fig:max_reward_CA_impact_of_lambda_p2}, there exists a trade-off between $p_1$ and $p_2$ when choosing $\lambda$ for the proposed algorithm.
With a larger $\lambda$, the proposed algorithm prioritizes the objective of serving more nodes by assigning more weight on the cardinality of the node set in each vertex, thus increasing $p_1$ but reducing $p_2$. 
We also see that that with a large $\lambda$, coexistence awareness does not help improve $p_1$ but is still able to effectively improve $p_2$.}

\rev{We then set $r=0.8$, $\lambda=0$ and vary $\bar{\alpha}$ from $0$ to $5$.
As shown in Fig.~\ref{fig:max_reward_CA_impact_of_alpha_limit_p1} and \ref{fig:max_reward_CA_impact_of_alpha_limit_p2}, a larger $\bar{\alpha}$ value allows more nodes to form a super-node, thus increasing both $p_1$ and $p_2$.
Nevertheless, such improvement tends to saturate at $\bar{\alpha}=2$, at which point all nodes in a clique are considered as a super-node and thus further increasing $\bar{\alpha}$ does not help any more.}

\subsection{Evaluation of Max-Utility CA}
In this section, we evaluate the proposed \revv{Max-Utility} algorithm (Algorithm~\ref{algo:UM_algorithm}) for non-binary GAA CA and compare its performance against a random selection-based baseline algorithm, which randomly selects a NC pair from each cluster. 
It repeats the process multiple times (e.g., $10000$) and returns the one with the maximum utility.

\subsubsection{Setup} 
We use the same setup in Section~\ref{sec:evaluation_max_reward_CA} and set $\epsilon=0$ for Algorithm~\ref{algo:UM_algorithm}. 
\revv{We consider two choices for reward and penalty functions.
In the first case, the number of channels assigned to each node is used as the reward, which is proportional to the best-case capacity 
(no interference). As in \cite{subramanian2007fast, kim2015design}, the estimated interference 
is used as the penalty  (normalized by the maximum pairwise interference).

In the second case, we are interested in the overall capacity of GAA networks \cite{hessar2015capacity}. Suppose that vertices $v$ and $u$ are selected for nodes $i$ and $j$, respectively. 
Let the signal to noise and interference ratio of node $i$'s client at location $x$ in channel $c_k$ be $\text{SINR}_i(x,c_k)=\gamma_i^{(k)}(x)/(N_0 W_0 +\sum_{j\neq i} \gamma_{i,j}^{(k)}(x))$, where $N_0$ is the noise power density (Watts/Hz), $W_0=10$ MHz is the bandwidth of a GAA channel, and $\gamma_i^{(k)}(x)$ and $\gamma_{i,j}^{(k)}(x)$ are the received power (Watts) and the interference due to node $j$ (Watts) at location $x$ in channel $c_k$, respectively.  
Hence, the capacity in channel $c_k$ averaged over the service area $\mathcal{A}_i$ is $\bar{C}_i(c_k)= W_0 \int_{\mathcal{A}_i} \frac{\log_2 (1+ \text{SINR}_i(x, c_k))}{\mathcal{A}_i} d x$, and the overall capacity of a GAA network is the sum of capacities over all assigned channels, i.e., $\bar{C}_i = \sum_{c_k \in C(v)} \bar{C}_i(c_k)$, where $C(v)$ is the set of assigned channels at vertex $v$. 

We consider the best-case capacity (no interference) as the reward of vertex $v$, namely, $R(v)=\bar{C}_i$, where $\bar{C}_i=\sum_{c_k \in C(v)} W_0 \int_{\mathcal{A}_i} \frac{\log_2 (1+ \text{SINR}_i(x, c_k))}{\mathcal{A}_i} d x$ and $\text{SINR}_i(x, c_k)=\gamma_i^{(k)}(x)/(N_0 W_0)$.  
If vertex $u$ is also selected, the SINR at location $x$ will be $\text{SINR}_i(x,c_k)=\gamma_i^{(k)}(x)/(N_0 W_0 + \gamma_{i,j}^{(k)}(x))$ and let the resulting capacity be $\bar{C}_i'$ \footnote{We focus on the pairwise interference between GAA nodes and thus omit the interference from incumbents.}. 
The penalty on edge $e(u,v)$ is defined as the capacity reduction, i.e., $P(e(u,v))=\bar{C}_i-\bar{C}'_i$. 
Therefore, the utility (with $\lambda=1$) corresponds to the overall capacity under pairwise interference.
Nevertheless, the parameter $\lambda$ 
can be adjusted to account for the difference between the actual capacity and that under pairwise interference.}

\begin{figure}[!t]
	\centering
	\subfloat[]{\includegraphics[width=.48\columnwidth]{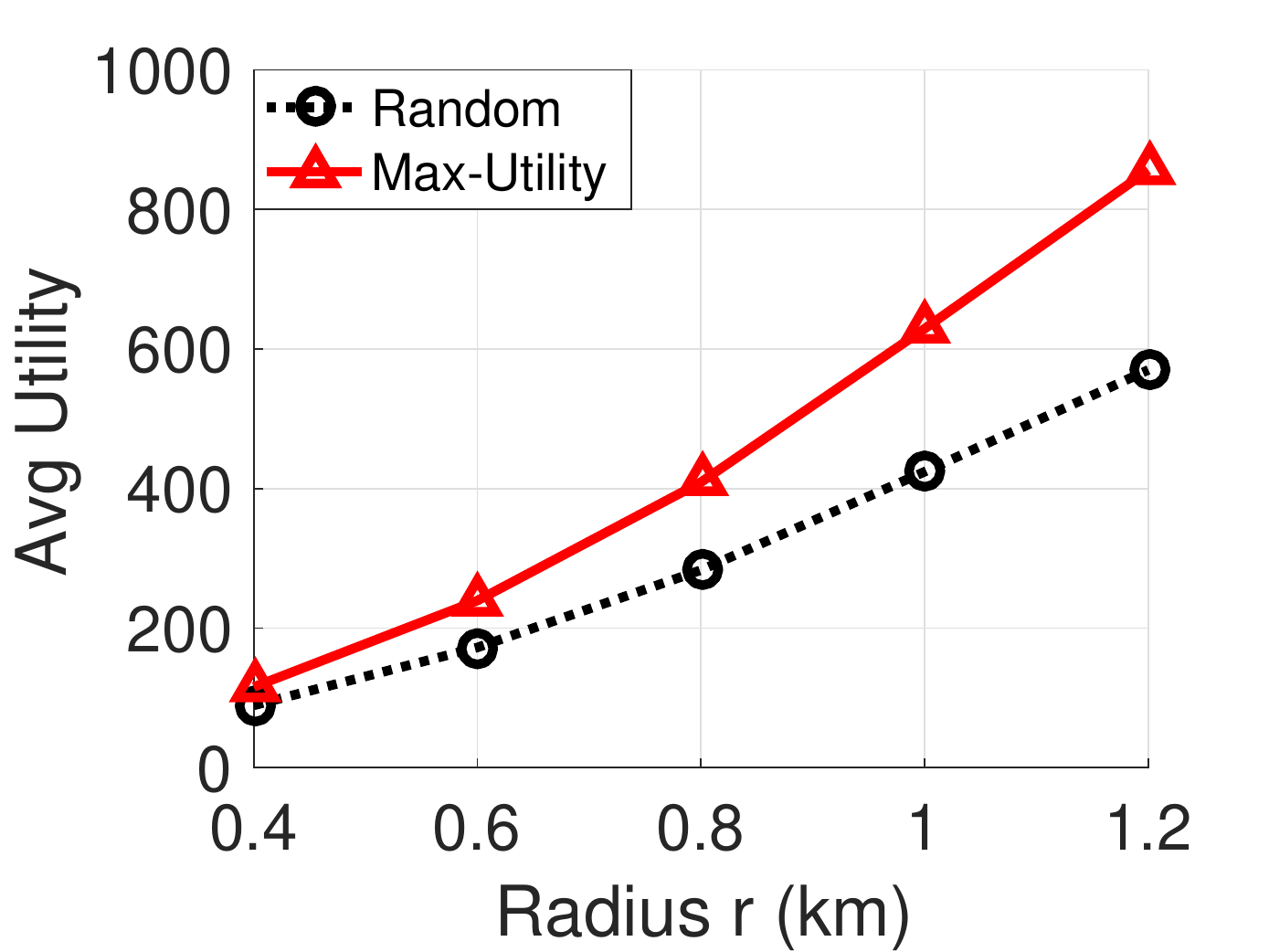}\label{fig:max_utility_CA_impact_of_radius_utility}}
	\subfloat[]{\includegraphics[width=.49\columnwidth]{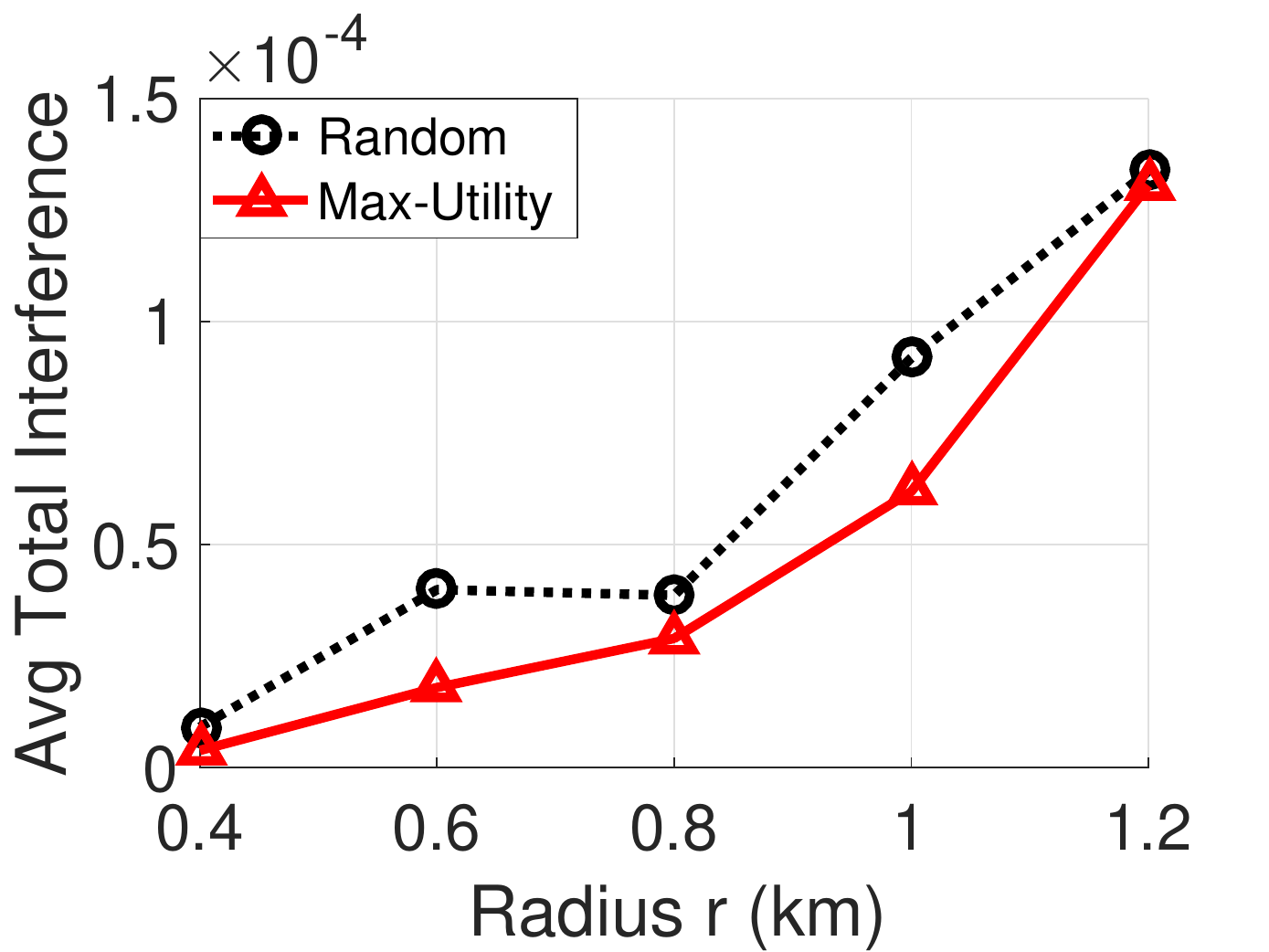}\label{fig:max_utility_CA_impact_of_radius_interference}}
	\caption{Performance of \revv{Max-Utility and random selection} in terms of (a) averaged utility and (b) averaged total interference as a function of the radius of SAS's service region. \revv{Max-Utility achieves a much greater utility and smaller total interference than random selection.}}
	\label{fig:max_utility_CA_impact_of_radius}
\end{figure}

\subsubsection{Results}
\revv{In the first case,} we set $\lambda=1$ and vary the radius $r$ from $0.4$ km to $1.2$ km. Results averaged over $30$ iterations are provided in Fig.~\ref{fig:max_utility_CA_impact_of_radius}.
As we can see, \revv{Max-Utility} is able to achieve a much greater utility than \revv{random selection}, especially when SAS's service region becomes larger. The improvement in average utility is more than $29.5\%$ in all cases and reaches $50.2\%$ with $r=1.2$ km. 
Besides, \revv{Max-Utility generates smaller total interference than random selection and allows the SAS to serve almost all nodes and demands.}

\revv{The impact of $\lambda$ is illustrated in Fig.~\ref{fig:max_utility_CA_impact_of_lambda} with $r=0.8$ km.}
We can observe that increasing $\lambda$ reduces the achieved utility, but the decrease for \revv{Max-Utility} is slower than \revv{random selection}. This is because a larger $\lambda$ implies a larger weight on the interference, which effectively makes \revv{Max-Utility} avoid selecting NC pairs with larger interference.
It also explains the decrease in the total interference for \revv{Max-Utility}. 
In all cases, \revv{Max-Utility} achieves a much greater utility while keeping the total interference much smaller than \revv{random selection}.

\begin{figure}[!t]
	\centering
	\subfloat[]{\includegraphics[width=.48\columnwidth]{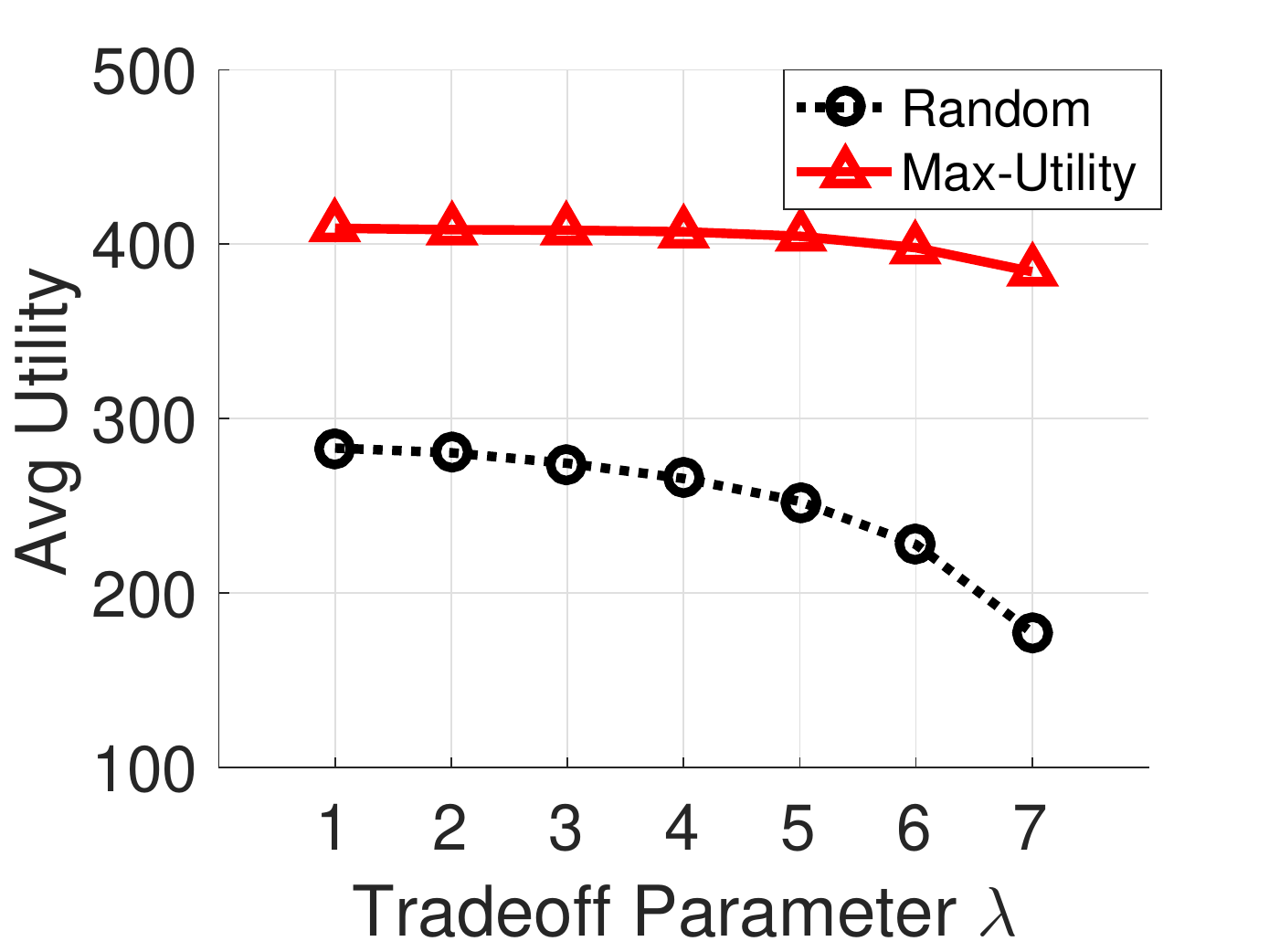}\label{fig:max_utility_CA_impact_of_lambda_utility}}
	\subfloat[]{\includegraphics[width=.48\columnwidth]{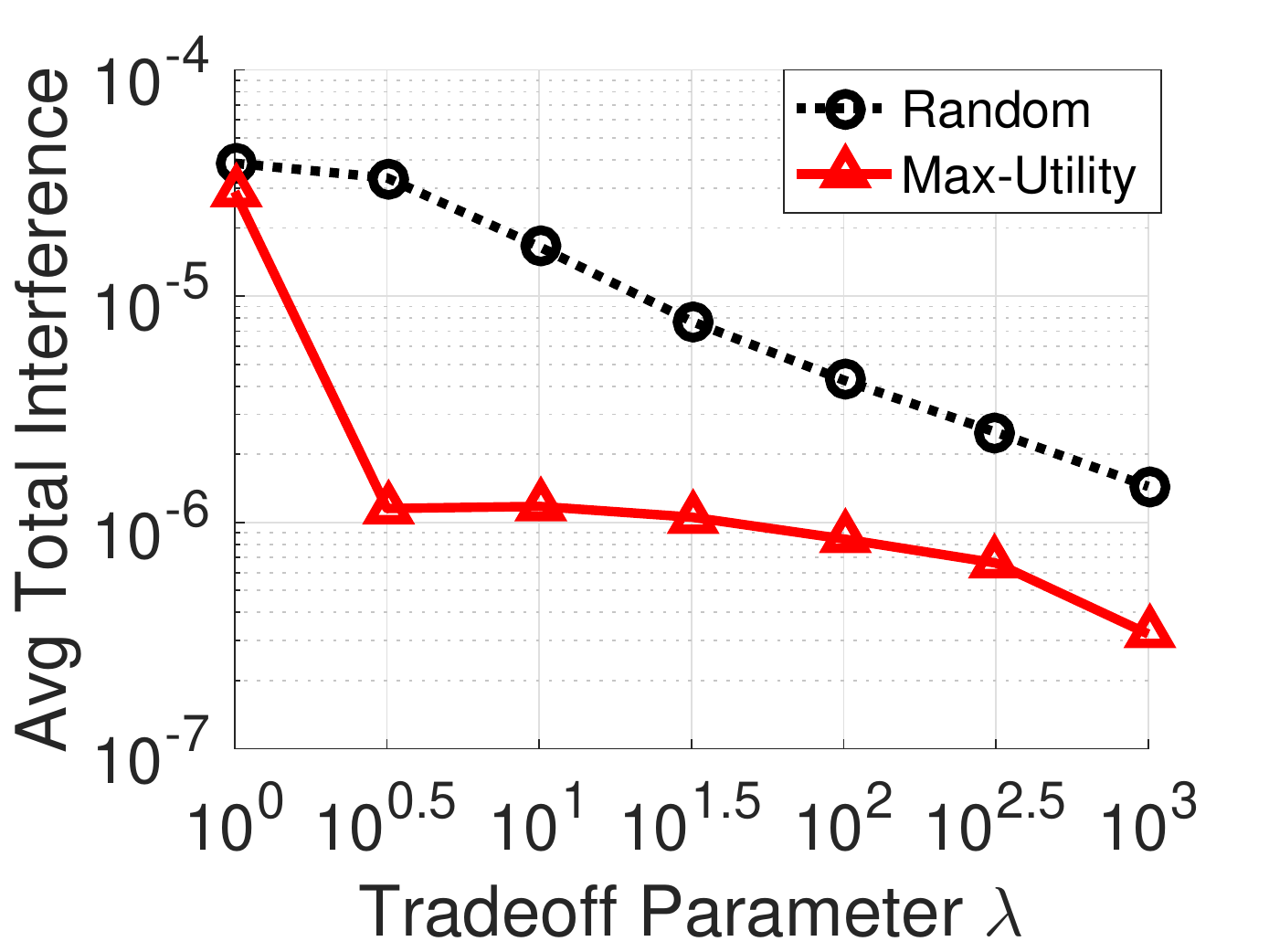}\label{fig:max_utility_CA_impact_of_lambda_interference}}
	\caption{Performance of \revv{Max-Utility and random selection} in terms of (a) averaged utility and (b) averaged total interference as a function of the trade-off parameter $\lambda$. \revv{We can see that increasing $\lambda$ puts more weights on the interference, thus leading to less interference at the cost of reduced utility.}}
	\label{fig:max_utility_CA_impact_of_lambda}
\end{figure}

\begin{figure}[!h]
	\centering
	\subfloat[]{\includegraphics[width=.48\columnwidth]{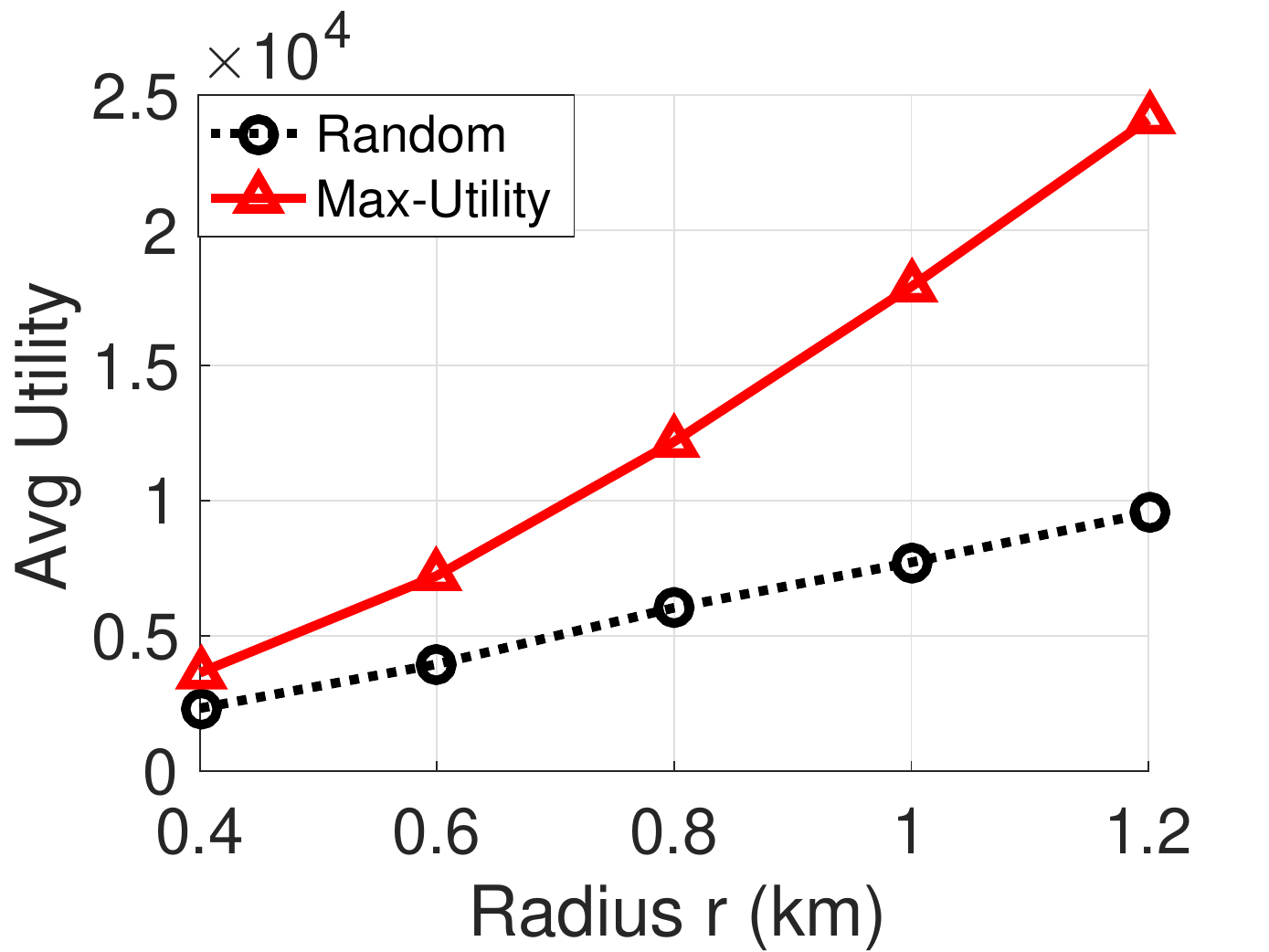}\label{fig:max_utility_CA_impact_of_radius_capacity}}
	\subfloat[]{\includegraphics[width=.48\columnwidth]{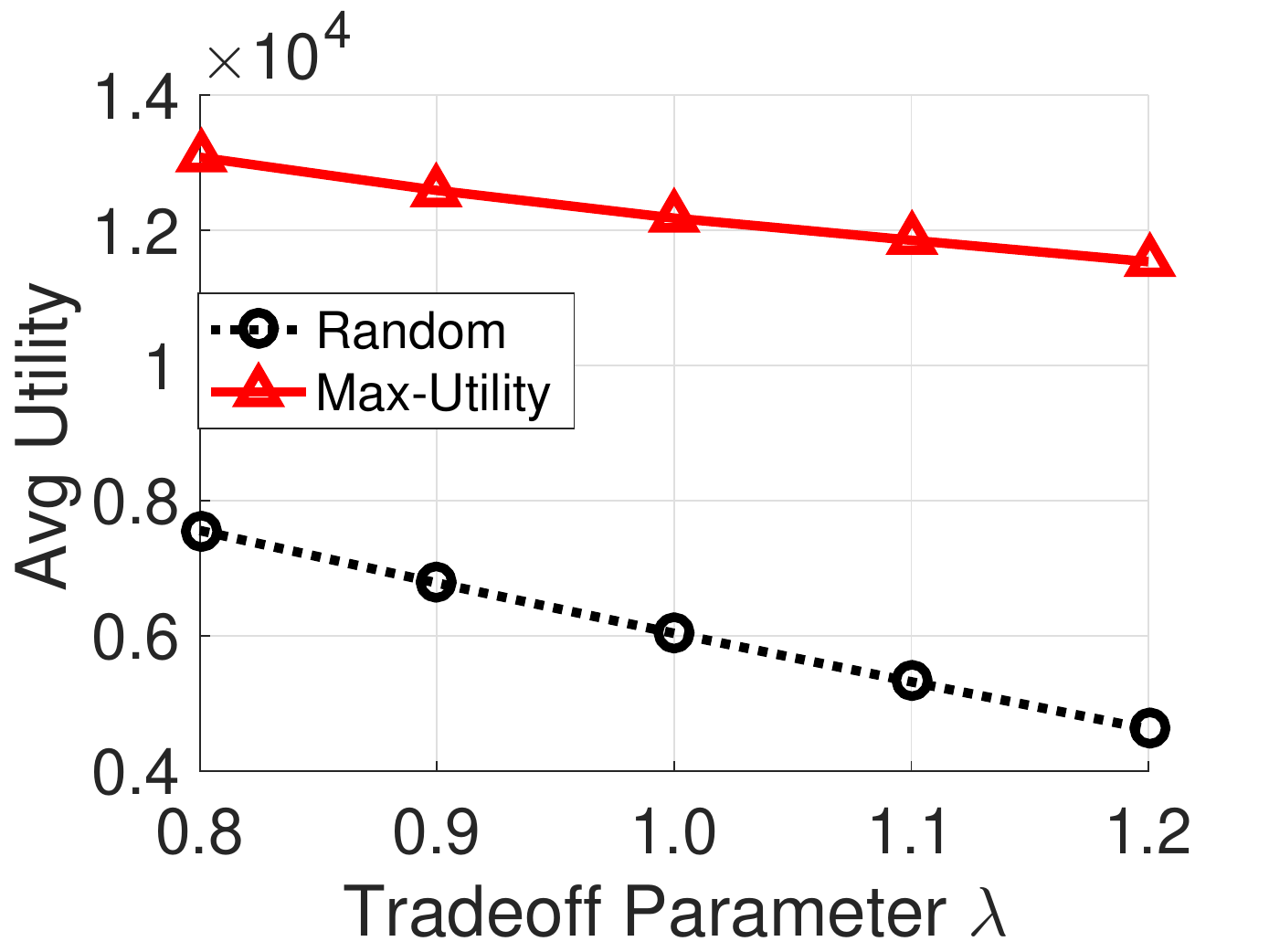}\label{fig:max_utility_CA_impact_of_lambda_capacity}}
	\caption{\revv{Average utility or capacity (Mbits/sec) achieved by Max-Utility and random selection as a function of the radius  $r$ and trade-off parameter $\gamma$. In both settings, Max-Utility leads to a much higher utility than random selection. }}
	\label{fig:max_utility_CA_capacity}
\end{figure}

\revv{In the second case, the same experiments are repeated and results are provided in Fig.~\ref{fig:max_utility_CA_capacity}. 
We observe that as $r$ increases, the average utility or capacity achieved by Max-Utility increases at a much faster rate than random selection. 
On the other hand, increasing $\lambda$ also increases the achieved utility, but Max-Utility significantly outperforms random selection.}

\section{Conclusion}\label{sec:conclusion}
In this paper, we \rev{studied SAS-assisted dynamic CA for the 3.5 GHz band.}
We proposed NC-pair conflict graph to model pairwise interference, spatially varying channel availability and channel contiguity. 
We further introduced super-NC pairs to exploit coexistence opportunities under type-II conflict and proposed the super-node formation algorithm to identify super-nodes.
The proposed conflict graph enables us to formulate PA CA as max-cardinality CA and GAA CA with binary conflicts as max-reward CA.
We adopted a heuristic-based algorithm to obtain approximate solutions.
To further enhance coexistence awareness, we extended binary conflicts to non-binary by assigning each conflict a penalty.
Then we formulated non-binary GAA CA as utility maximization. 
We showed that the utility function is submodular and  our problem is an instance of matroid-constrained submodular maximization. 
We proposed a local-search-based polynomial-time algorithm for max-utility CA that provides a provable performance guarantee. 

Our simulation results based on a real-world dataset show that the proposed max-cardinality algorithm consistently serves over $93.0\%$ of service areas and outperforms the baseline algorithm by over $30.0\%$ for PA CA.
For binary GAA CA, the proposed max-reward algorithm with linear rewards can accommodate $10.2\%$ more nodes and serve $10.4\%$ more demand on average than the baseline, and coexistence awareness can effectively improve the performance of the proposed algorithm.
For non-binary GAA CA, the proposed max-utility algorithm achieves $29.5\%$ more utility with much less total interference as compared to \revv{random selection}.
\revv{We also show that an operator can leverage the proposed framework to optimize the overall capacity of GAA networks.}

\bibliographystyle{IEEEtran}
\bibliography{IEEEabrv}

\begin{appendix}
	\subsection{Proof of Lemma~\ref{lemma:submodular}} \label{proof:lemma:submodular}
\begin{proof}
	Consider any two sets $S$, $T$ with $S\subseteq T\subseteq V$ and any $v \in V\setminus T$.
	Since $P_{u,v} \geq 0$, we any $u,v \in V$, we have
	\begin{align*}
	U(S &\cup \{v\}) - U(S) = R(v) - \lambda \cdot \sum_{\substack{u \in S}} [P_{u,v} + P_{v,u}] \\
	&\geq R(v) - \lambda \cdot ( \sum_{\substack{u \in S}} [P_{u,v} + P_{v,u}] + \sum_{\substack{u \in T\setminus S}} [P_{u,v} + P_{v,u}] ) \\
	&= R(v) - \lambda \cdot \sum_{\substack{u \in T }} [P_{u,v} + P_{v,u}] \geq U(T \cup \{v\}) - U(T),
	\end{align*}
	\normalsize
	which establishes the submodularity of $U(\cdot)$ by the definition in  Eq.~(\ref{eq:equivalent_subomdular_def}).
\end{proof} 

\subsection{Proof of Theorem~\ref{theorem:bound}} \label{proof:theorem:bound}
\begin{proof}
	Let $OPT_1 = OPT \cap V$, and $OPT_2 = OPT \cap V'$.
	Hence, we have $OPT_1 = OPT$.
	It has been shown in \cite{lee2010maximizing} that the {\tt LS} procedure returns an approximately locally optimal solution $I$ such that  $(2+\epsilon) f(I) \geq f(I\cup C) + f(I \cap C)$ for any $C \in \mathcal{I}$, where $\epsilon \geq 0$ is the parameter used in {\tt LS}.  
	The above result implies that
	\begin{align*}
	(2 + \epsilon) U(I_1) &\geq U(I_1 \cup OPT_1) + U(I_1 \cap OPT_1) \\
	(2 + \epsilon) U(I_2) &\geq U(I_2 \cup OPT_2) + U(I_2 \cap OPT_2).
	\end{align*}
	Using $U(I) \geq \max\{U(I_1), U(I_2)\}$, we have
	\begin{align*}
	(4&+2\epsilon) U(I) \geq [U(I_1 \cup OPT_1) + U(I_2 \cup OPT_2)] \\
	&~~~~+ U(I_1 \cap OPT_1) + U(I_2 \cap OPT_2) \\
	&\geq U(I_1 \cup I_2 \cup OPT_1) + [U(OPT_2) + U(I_1 \cap OPT_1)] \\
	&~~~~+ U(I_2 \cap OPT_2) \\
	&\geq U(I_1 \cup I_2 \cup OPT_1) + U(OPT_1) + U(I_2 \cap OPT_2) \\
	&\geq U(OPT_1) + 2 U_{\min} = U(OPT) + 2 U_{\min}.
	\end{align*}
	\normalsize
	The first inequality is obvious.
	The second inequality follows from submodularity, using $(I_1 \cup OPT_1) \cup (I_2 \cup OPT_2) = I_1 \cup I_2 \cup OPT_1$ and $(I_1 \cup OPT_1) \cap (I_2 \cup OPT_2) = OPT_2$.
	The third inequality is also follows from submodularity, using $OPT_2 \cup (I_1 \cap OPT_1) = OPT_1$ and $OPT_2 \cap (I_1 \cap OPT_1) = \emptyset$.
	Hence, we have $U(I) \geq \frac{1}{(4 + 2\epsilon)} [U(OPT) + 2 U_{\min}]$.
\end{proof}

\subsection{Proof of Proposition~\ref{proposition:complexity}}\label{proof:proposition:complexity}
	\begin{proof}
		Let the first element of $I$ be $v_1$ (Line~\ref{algo:line:start} in {\tt LS}), and we have 
		\begin{equation*}
		f(OPT) \leq \sum_{s \in OPT} f(s) \leq |OPT|\cdot f(s^*) \leq N \cdot f(s^*) \leq N \cdot f(v_1)
		\end{equation*}
		where $s^* = \arg \max_{s \in OPT} f(s)$ and the first inequality is due to  submodularity. 
		Since $f(v_1) \geq \frac{f(OPT)}{N}$ and each local operation increases the value by a factor $(1+\frac{\epsilon}{N^2})$, the maximum number of iterations is $\log_{1+\frac{\epsilon}{N^2}} \frac{f(OPT)}{\frac{f(OPT)}{N}} = O(\frac{1}{\epsilon} N^2 \log N)$. 
		As each iteration requires at most $N$ evaluations of the objective function $f(\cdot)$, it implies that the computational complexity of {\tt LS} is $O(\frac{1}{\epsilon}N^3 \log N)$ in the worst case.
		Despite calling the procedure {\tt LS} twice, the overall computational complexity of {\tt UM} (Algorithm~\ref{algo:UM_algorithm}) is also $O(\frac{1}{\epsilon} N^3 \log N)$, which is polynomial in $N$.
	\end{proof}

\end{appendix}

\end{document}